\documentclass[journal,letterpaper]{IEEEtran}
\usepackage{hyperref}
\usepackage{amsmath,amssymb,amsthm}
\usepackage{graphics, subfigure}
\usepackage{tikz}
\usepackage{pgfplots}
\usepackage{algorithm, algorithmic}
\usepackage[outdir=./]{epstopdf}
\usepackage{epstopdf, xspace}
\usepackage{color}
\usepackage{multirow}
\usetikzlibrary{automata,positioning}
\usepackage{wrapfig}
\usepackage{rotating}
\usepackage{flushend}
\usepackage{bbm}

\usepackage{enumitem}
\usepackage{algorithm, algorithmic}

\newtheorem{theorem}{Theorem}

\newtheorem{lemma}{Lemma}

\newcommand{\ie}{\textit{i.e.,}\xspace}
\newcommand{\eg}{\textit{e.g.,}\xspace}

\newenvironment{thisnote}{}{}
\newcommand{\floor}[1]{\lfloor #1 \rfloor}
\usepackage{pgfplots} 
\pgfplotsset{compat=newest} 
\pgfplotsset{plot coordinates/math parser=false} 
\newlength\figureheight 
\newlength\figurewidth

\begin{document}
\definecolor{brown}{cmyk}{0,0.81,1,0.60}
\definecolor{magenta}{rgb}{0.4,0.7,0}
\definecolor{gray}{rgb}{0.5,0.5,0.5}
\definecolor{red}{rgb}{1,0,0}
\definecolor{green}{rgb}{0.5,0,0.5}
\definecolor{blue}{rgb}{0,0,1}


\ifthenelse{\isundefined{\final}} {
\newcommand{\vaneet}[1]{{\color{green}(VA: #1)}}
\newcommand{\shuai}[1]{{\color{red}(SH: #1)}}
\newcommand{\feng}[1]{{\color{blue}(FQ: #1)}}
\newcommand{\anis}[1]{{\color{brown}(AE: #1)}}
\newcommand{\shubho}[1]{{\color{magenta}(SS: #1)}}
}{
\newcommand{\vaneet}[1]{}
\newcommand{\shuai}[1]{}
\newcommand{\feng}[1]{}
\newcommand{\anis}[1]{}
\newcommand{\shubho}[1]{}
}

\newcommand{\eat}[1]{}

\newcommand{\BULLET}{\vspace{+.03in} \noindent $\bullet$ \hspace{+.00in}}

%

\title{GroupCast: Preference-Aware Cooperative Video Streaming with Scalable Video Coding }


 \author{Anis Elgabli, Muhamad Felemban, and Vaneet Aggarwal\thanks{The authors are with Purdue University, West Lafayette, IN 47907, email:\{aelgabli, mfelemba, vaneet\}@purdue.edu. This paper was presented, in part, at the IEEE Infocom Workshop 2018 \cite{workshop}.  
}}
 
\maketitle

\begin{abstract}

In this paper, we propose a preference-aware cooperative video streaming system for videos encoded using Scalable Video Coding (SVC) where all the collaborating users are interested in watching a video together on a shared screen. However, each user's willingness to cooperate is subject to her own constraints such as user data plans and/or energy consumption. Using SVC, each layer of every chunk can be fetched through any of the cooperating users. We formulate the problem of finding the optimal quality decisions and fetching policy of the SVC layers of video chunks subject to the available bandwidth, chunk deadlines, and cooperation willingness of the different users as an optimization problem. The objective is to optimize a QoE metric that maintains a trade-off between maximizing the playback rate of every chunk while ensuring fairness among all chunks for the minimum skip/stall duration without violating any of the imposed constraints. We propose an offline algorithm to solve the non-convex optimization problem when the bandwidth prediction is non-causally known. This algorithm has a run-time complexity that is polynomial in the video length and the number of cooperating users. Furthermore, we propose an online version of the algorithm for more practical scenarios where erroneous bandwidth prediction for a short window is used. Real implementation with android devices using SVC encoded video on public bandwidth traces' dataset reveals the robustness and performance of the proposed algorithm and shows that the algorithm significantly outperforms round robin based mechanisms in terms of avoiding skips/stalls and fetching video chunks at their highest quality possible.


\end{abstract}

\begin{IEEEkeywords}Cooperative Video Streaming, Scalable Video Coding,  Video Quality, Non Convex Optimization
	\end{IEEEkeywords}

\section{Introduction}
\label{sec:intro}

Video streaming is the dominant contributor to the cellular traffic. Currently, video content accounts for $50\%$ of cellular traffic and it is expected to account for around $75\%$ of the mobile data traffic by the year of 2020 \cite{ericsson_report}. This increase has forced service providers to enhance their infrastructures to support high-quality video streaming. Despite these efforts, users frequently experience low Quality-of-Experience (QoE) metrics such as choppy videos and playback stalls~\cite{zou2015can}. One prominent approach to improve the QoE is the use of cooperative video streaming technology~\cite{iosifidis2014enabling}, which allows to aggregate the available bandwidth for different users in order to increase the download rate; and hence increase the video playback rate. Several challenges have to be addressed when using cooperative video streaming, including finding a fetching policy among the users while using their resources efficiently. 

In Scalable Video Coding (SVC), each chunk is encoded into ordered \emph{layers}: a \emph{base layer} (BL) with the lowest playable quality, and multiple \emph{enhancement layers} ($E_1,..,E_N$) that further improve the quality~\cite{SVC_encoding}. The video player must download all layers from 0 to $i$ in order to decode a chunk up to the $i^{th}$ enhancement layer. Consequently, adaptive SVC streaming can allow playback at a lower quality if all the enhancement layers of a chunk have not been fetched before its deadline. For cooperative video streaming, when different users are willing to cooperate in order to improve the quality of the next chunk to fetch, different layers of the same chunk can be fetched through different contributors.

Cooperation among more users leads to a diversity in the user channels, and in some cases diversity in the providing carriers ({\it e.g.}, AT\&T and Verizon). Even when all users belong to the same carrier, the users of Groupcast share the resources with the other users on the same base station. If there are $N$ Groupcast users and $M$ other users of a tower, assuming equivalent channel strengths to the users, Groupcast users use $\frac{N}{M+N}$ of the base station resources which increases with $N$. We assume that the wireless link is the bottleneck, rather than the backbone network. 

To motivate our problem, consider the scenario where a group of users in a location that does not provide Internet connection, \eg a camp site, is interested in watching a football game on an High-Definition screen. The users have mobile phones with varying data plans limits, \eg unlimited, 6GB, or 3GB, and varying energy levels at the time of the game. The users are willing to participate in streaming the game using their data plans on their phones. With these assumptions, there will be preferences based on which priority sets of the users are defined. Users belong to a higher priority set can help in fetching up to a higher layer. We impose a maximum contribution limit for every user based on the remaining data on their plans. The maximum contribution indicates the maximum amount of data the user can download for the entire video. For example, certain percentage of the remaining data of the $i^{th}$ user plan can be used as a maximum contribution.

Figure~\ref{fig : ex1} illustrates an example of a setup with four users. The video is encoded into four layers, one base layer and three enhancement layers. Users 0 and 1 have unlimited data plans and have no maximum contribution limit, and thus can fetch all the layers. The two users are assigned to the top priority set. User 2 is assigned to the second priority set and has a limited data plan ($6$ GB/month). User 2 can contribute to increase the quality of the chunks up to the $1^{st}$ enhancement layer quality if the first two users fail due to their limited bandwidths. although User 2 helps in fetching chunks, she also has a maximum contribution limit of 1 GB. User 3 is assigned to the third priority set and can help in obtaining base layer if the other users are not able to obtain the chunks at that set. Further, the maximum contribution of user 3 is 500 MB. 

\begin{figure}[t!]
	\centering
	\includegraphics[ clip,  width=.48\textwidth]{./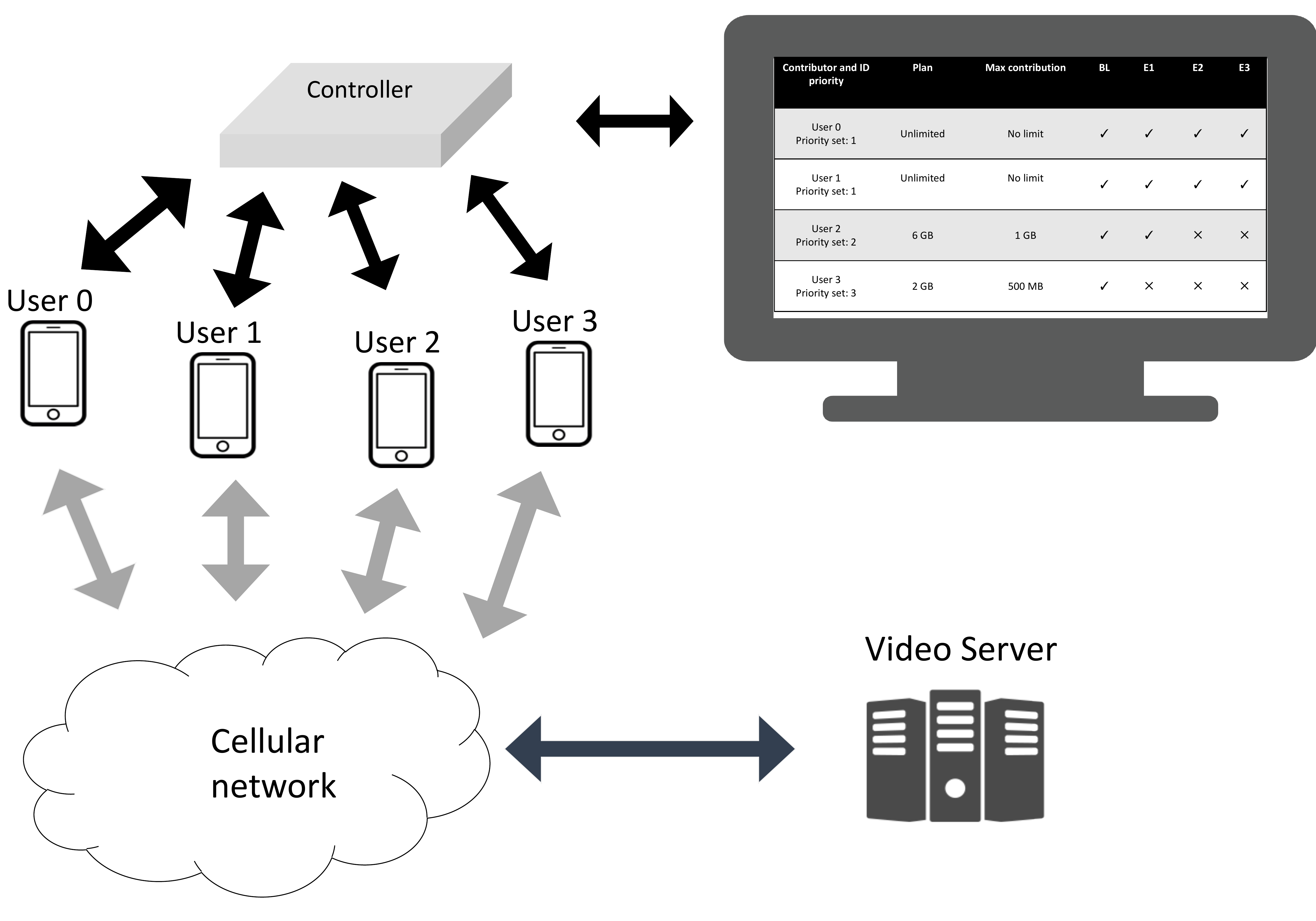}	
	
	\caption{Example to illustrate the different inputs by the different users. These inputs include the priority set, amount of data they can contribute, and the different layers they can help fetch to maximize the quality of experience for video viewing.}
	\label{fig : ex1}
\end{figure}

In this paper, we propose a novel preference-aware cooperative video streaming system that allows multiple collaborative users to watch SVC-encoded videos with high QoE on a shared screen. In particular, we propose streaming algorithms for two classes of streaming scenarios: {\em skip based} and {\em no-skip based} streaming. In both scenarios, each chunk is associated with a deadline. However, in the skip based streaming (real-time streaming), the chunks not received by their their deadlines are skipped. For no-skip based streaming, if a chunk cannot be received by its deadline, it will not be skipped; instead, a stall (re-buffering) will incur until it is fully received. Although the system and the problem formulation are described for SVC, they are applicable to any layered coding technique. Therefore, videos encoded using Scalable High efficiency Video Coding (SHVC), a scalable extension of H.265/HEVC \cite{tech2016overview}, can  be cooperatively streamed using the proposed approach.

An optimization problem is formulated when the bandwidth can be obtained offline for both the scenarios.  The optimization problem optimizes a QoE metric that maximizes a concave utility function of the chunk qualities and minimizes skip/stall duration while respecting the priority of the users (users belong to lower priority sets should be used less) and their imposed maximum contributions. Since the proposed optimization problem is non-convex, optimal low-complexity algorithms are not straightforward. Therefore, we provide efficient algorithms which have a polynomial run-time complexity with respect to the number of contributing users and fetched layers. Moreover, we provide some guarantees of the proposed algorithm. 
 
The proposed algorithms are further extended to the practical online case where the bandwidth is only known for a short time ahead and can be erroneous. Any bandwidth prediction method like the crowd-sourcing, or harmonic mean prediction method \cite{MPC,chen2016msplayer}, can be used which provides the bandwidth prediction with some errors.

The proposed algorithm is implemented and tested using android devices with real SVC encoded videos and real bandwidth traces. The algorithm is shown to significantly outperform the considered baselines. The  contributions of this paper are summarized as follows. 

\BULLET  We introduce a video streaming system, called GroupCast, that integrates the preference of the user in the cooperative streaming of SVC-encoded videos.


\BULLET We propose low-complexity, window-based algorithms for both the  {\em skip based} and {\em no-skip based} streaming scenarios.

\BULLET We provide theoretical guarantees of the proposed algorithm. 

\BULLET  We evaluate the proposed algorithms with a real implementation using  android devices, SVC-encoded videos, and public bandwidth traces. The algorithm is shown to significantly outperform the baseline algorithms.

The rest of the paper is organized as follows. Section \ref{sec:sysModel} describes the system model. Section \ref{sec:problem} describes the problem formulation for skip based cooperative video streaming. Further, a set of polynomial-time algorithms are provided in section~\ref{skipalgoEqual} for this non-convex problem. No-skip scenario is considered in section~\ref{no_skip}. Section \ref{sec:eval} presents the trace-driven evaluation results with comparison to different baselines. Section \ref{sec:concl} concludes the paper. 

\section{Related Work}
\label{sec:related}

{\bf Adaptive Bit Rate (ABR) Streaming for single user. } Commercial systems such as  Apple's HLS~\cite{HLS}, Microsoft's Smooth Streaming~\cite{SS}, and Adobe's HDS~\cite{HDS} are different variants of ABR streaming. Various approaches for making ABR streaming decisions have been investigated, including control theory~\cite{MPC,Miller15}, Markov Decision Process~\cite{Jarnikov11}, machine learning~\cite{Claeys14}, client buffer information~\cite{BBA}, and data-driven techniques~\cite{Liu12,C3,CS2P}.

In a single user video streaming scenario, the rate adaptation techniques can be classified into: (i) buffer-based rate adaptation techniques, in which the quality decisions of the next few chunks to fetch is decided based on the current buffer occupancy (\eg ~\cite{BBA}), (ii) prediction-based techniques in which a prediction method such as crowd-sourcing~\cite{riiser2013commute,GTube} or harmonic mean~\cite{MPC} is used to predict the bandwidth of the next few seconds and decide the quality of the next few chunks to fetch, (iii) a combination of buffer and prediction in which the current buffer occupancy and the predicted bandwidth are incorporated in order to decide the quality of the next few chunks~\cite{MPC}. The third class is shown to outperform the first two classes~\cite{MPC}. In our proposed formulation, we incorporate both buffer (chunk deadlines) and prediction into our quality decisions.

{\bf Adaptive SVC streaming for single user. } SVC received the final approval to be standardized as an amendment of the H.264/MPEG-4 AVC (Advanced Video Coding) standard
in 2007~\cite{svcrelease}.  The work~\cite{SVCDataset} published the first dataset and toolchain for SVC.
Some prior work~\cite{Andelin12,Sieber13} proposed new rate adaptation algorithms for SVC that alternate between prefetching future base layers and backfilling current enhancement layers. Finally, \cite{AnisTONSingle} proposed an optimization based rate adaptation technique for single link SVC-encoded videos in which a QoE metric that maintains tradeoff between maximizing the quality of each chunk and ensuring fairness among all chunks for the minimum stall/skip duration is optimized.

{\bf Multi-Path video streaming.} For multi-path streaming in which a single client is opening parallel connections with a server using orthogonal links,\ie, LTE and WiFi, Multi-path TCP (MPTCP~\cite{mptcp}) is proposed. However, MPTCP does not work across users and have some other implementation and security issues~\cite{chen2016msplayer,deconinck16:pam}. The authors of~\cite{chen2016msplayer} proposed a heuristic approach for Multi-Path rate adaptation in which multiple servers, single client with two links (LTE and WiFi), and AVC encoded videos are considered. However, this rate adaptation technique considers only two paths, and is proposed for ABR based streaming. Thus, it does not exploit the flexibility of SVC where a fine-grained decision per layer can be taken. Furthermore, these work do not account for maximum contribution limit on one/both of the paths. Finally, \cite{DBLP:journals/corr/abs-1801-01980} considered SVC multipath preference aware streaming for SVC but for only 2 links without imposing maximum contribution per link.   

{\bf Cooperative video streaming.} For multiple users interested in watching the same video, streaming algorithms using device-to-device (D2D) connections have been studied. In \cite{stiemerling2009system}, authors proposed a Peer-to-Peer (P2P)-TV system that enables multiple contributors to download a single rate encoded video in which every contributor is assigned a chunk to fetch such that all chunks are retrieved before their deadlines. In \cite{keller2012microcast}, authors proposed network coding based cooperative video streaming for a single rate encoded video where multiple users can aggregate their bandwidth in a more reliable way and fetch the video. However, none of these efforts considers SVC-encoded videos in which every chunk is available at base layer and its quality can be further improved by downloading enhancement layers. 

The authors of \cite{grafl2013scalable} explain the role of Content-Aware Networking (CAN) to optimize network resource utilization while maintaining high QoE and QoS. The functionality of CAN is provided by enhanced network nodes, termed \textit{media-aware network elements} (MANEs), that feature feature virtualization support, content awareness, and media processing as well as buffering and caching. The proposed MANEs take advantage of SVC technology in order to provide scalable streaming for users. Although the paper has laid out the main challenges of CAN in several use cases including P2P, the paper does not provide streaming algorithms for SVC-encoded videos. In \cite{eberhard2010knapsack}, authors have proposed piece-picking algorithms to download selective pieces of distributed layered content over P2P networks before their deadlines. The proposed algorithms for skip-based streaming are provided based on Knapsack problem. However, the setup in \cite{eberhard2010knapsack} obtains content at a receiver from multiple peers while in our problem, multiple users obtain content simultaneously to display on a single screen. 

In \cite{yaacoub2013svc}, authors have considered cooperative SVC video streaming for SVC-encoded videos but with only two layers for LTE/802.11p Vehicle-to-Infrastructure Communications. In \cite{lee2011co}, authors proposed a video assignment strategy based on the deficit round robin (DRR) for combined SVC and Multiple Description Coding (MDC), termed Co-SVC-MDC-Based Cooperative video Streaming Over Vehicular Networks. In this strategy, the decision of assigning layers to users follow simple round robin strategy and does not take into account users's preferences. The authors of \cite{an2017svc} proposed an SVC cooperative video streaming for vehicular networks when there is only one relay node per receiver, \ie at most two contributors. In contrast to these works, we do not limit the number of cooperating users, and we propose an algorithm for SVC-encoded videos with any number of layers. In addition, we assume preference aware streaming from users with a maximum contribution limit per user. We consider both skip and no-skip based streaming. We formulate the cooperative video streaming problem as a non-convex optimization problem and  develop an efficient and practical algorithm for solving the proposed problem. The proposed algorithm makes a fine-grained decision per layer in order to decide if it can be fetched and to decide which user can fetch that layer. 

\section{System Model}
\label{sec:sysModel}

{In this section, we describe the system model. A summary table for all the symbols used throughout the paper is listed in Appendix \ref{app:symbols}}. We assume that there are $U$ users and that the SVC-encoded video is divided into $C$ chunks, where every chunk is of length $L$ seconds. The video is encoded in Base Layer (BL) with rate $r_0$ and $N$ enhancement layers ($E_1, \cdots, E_{N}$) with rates $r_1, \cdots, r_{N}$,  respectively. Let $Y_n=L*r_n$ be the size of the $n^{th}$ layer, where $n \in \{0, \dots, N\}$.

The users are divided into $K$ disjoint sets, where $K \le N+1$. The number of users in the $k^{th}$ set, for $k\in \{1, \cdots K\}$, is denoted by $U_k$. Notice that $U = \sum_{k=1}^K U_k $. We denote the maximum layer that the users in the $k^{th}$ set can fetch as $N_k$, where $N_k\in \{0,\cdots, N\}$. We assume that the maximum layer that can be fetched by the $(k+1)^{th}$ set is less than the maximum layer that can be fetched by the $k^{th}$ set, \ie $N_1 > N_2 > \dots > N_K$. To illustrate, assume that $K=2$, \ie the users are divided into two sets. Therefore, if the users in the first set, \ie $k=1$, can fetch, for example, up to the $3^{rd}$ layer, \ie $N_1 = 3$, users in the second set, \ie $k=2$, can at most fetch up to the $2^{nd}$ layer, \ie $N_2 = 2$. 

We assume that each user, $u$, in the $k^{th}$ set has a maximum download contribution of $\eta_{u}^k$.
Furthermore, we assume a strict priority among users in different sets. In essence, a user in the $k^{th}$ set fetches chunks at layers from $\{0, \cdots, N_k\}$ only if the users in the higher priority sets, \ie sets $1$ to $k-1$, are unable to fetch layers from $\{0, \cdots, N_k\}$ without violating the deadlines. In the rest of this paper, we use the terms \textit{user} and \textit{link} interchangeably.

The packing of the users into sets can be done automatically based on the available data plans. The available data plans can be quantized to get the set assignment. Then, a particular percentage of the available data plan can be chosen as the maximum contribution limit. Even though the set assignments and the contribution limits can be set automatically, the users have the flexibility to modify these limits. The designer may, however, limit the possibility of decreasing the priority level or the maximum contribution limit to avoid greedy behavior of certain users. However, the decrease of the maximum contribution based on available battery charge at the user may be allowable. In this paper, we assume that the user priority levels and the maximum contribution limits are known.

We consider two streaming scenarios: the skip based streaming and the no-skip based streaming. For the skip based streaming, the video is played with an initial start-up, \ie buffering, delay of $s$ seconds. There is a playback deadline for each of the chunks, where chunk $i$ needs to be downloaded by $deadline(i) = s+(i-1) L$ seconds. The chunks that cannot be received before their respective deadlines are skipped. For the no-skip based streaming, it also has a start-up delay of $s$ seconds. However, if a chunk cannot be downloaded before its deadline, it will not be skipped. Instead, a stall, \ie re-buffering, will occur. As a result, the video will pause until the chunk is fully downloaded. Let the total stall duration from the start till the play-time of chunk $i$ be $d(i)$. Therefore, the deadline of any chunk $i$ is $deadline(i)=(i-1)L+s+d(i)$. In both scenarios, the objective of the scheduling algorithm is to determine which layers need to be fetched for each chunk such that the skip/stall duration is minimized as the first priority and the overall playback bitrate is maximized considering the priorities and the maximum contribution of different users.

We assume that all time units are discrete and that the discretization time unit is 1 second. Let $Z_{n,i}$ denotes how much of the $n^{th}$ layer of chunk $i$ that can be fetched, \ie if the $n^{th}$ layer of chunk $i$ can be fetched, then $Z_{n,i}=Y_n$; otherwise $Z_{n,i}=0$. Let $z_{n,u}^{(k)}(i,j)$ be the amount of the $n^{th}$ layer of chunk $i$ that is fetched at time slot $j$ from the link ${\bf L}_{k,u}$, where ${\bf L}_{k,u}$ represents the $u^{th}$ user in $k^{th}$ set. The sum of $z_{n,u}^{(k)}(i,j)$ over all $u,k,j$ is $Z_{n,i}$. Let $D_{n,u}^{(k)}(i) = \sum_{j=1}^{deadline(C)} z_{n,u}^{(k)}(i,j)$ be the total amount of the $n^{th}$ layer of chunk $i$ that is fetched from the link ${\bf L}_{k,u}$. Further, let $G_n^{(k)}$ be the total amount of the $n^{th}$ layer fetched from all users in the $k^{th}$ set. Thus, $G_n^{(k)} = \sum_{u=1}^{U_k}\sum_{i=1}^{C}D_{n,u}^{(k)}(i)$. We assume that at most one link can be used to get the entire $n^{th}$ layer of chunk $i$. Therefore, for every layer $n$ of chunk $i$, there will be at most one link such that $D_{n,u}^{(k)}(i) > 0$. Let $B_u^{(k)}(j)$ be the available bandwidth of the $u^{th}$ link in the $k^{th}$ set.


\section{Skip Based Streaming: Problem Formulation}
\label{sec:problem}

In this section, we present the problem formulation for the skip based scenario. The key objectives of the problem are to \emph{(i)} minimize of  the number of skipped chunks, \emph{(ii)} maximize of the  playback rate of the video,  and \emph{(iii)} minimize of the quality changes between the neighboring chunks. As a result, the obtained optimal fetching policy is the one that minimizes the total number of skips as the highest priority, while preferring chunks at the $n^{th}$ layer quality level over increasing the quality of some chunks to higher layers at the cost of dropping the quality of other chunks. Such strategy avoids unnecessary layer switchings and leads to a smoother and more eye-pleasant playback of the video. We first describe the proposed problem formulation of the offline version that assumes perfect prediction for the whole period of the video and infinite buffer capacity. These assumptions are relaxed when we describe the online algorithm in Section~\ref{sec:bw_err}.

In order to account for the three objectives with their priority orders, we maximize a weighted sum of layer sizes fetched by different users, where lower layers and users belong to higher priority sets are given higher weights. In order to do so, we introduce two-dimensional weights $\lambda_n^{k}$ that need to satisfy the following condition:  for any users sets $k$, $k^\prime \in \{(k+1)....K\}$, and $k^{\prime\prime} \in \{1....k\}$
 \begin{equation}
 \lambda_a^{(k)} Y_{a} > C \cdot \Big( \sum_{n=a}^{N}\lambda_n^{k^{\prime}}Y_n+ \sum_{n=a+1}^{N}\lambda_n^{k^{\prime\prime}}Y_n\Big) .
 \label{basic_gamma_1}
\end{equation}
The choice of $\lambda$'s that satisfies Equation \ref{basic_gamma_1} implies two requirements. First, it implies that, for any layer $a$, the layers higher than $a$ have lower utility than a chunk at layer $a$. In the case when $a=0$, the choice of $\lambda$ implies that all the enhancement layers achieve less utility than one chunk at the base layer. The use of $\lambda$ then helps in giving higher priority to fetch more chunks at the $n^{th}$ layer quality over fetching some at higher quality at the cost of dropping the quality of other chunks to below the $n^{th}$ layer quality. Second, the choice of $\lambda$ implies that the highest utility that can be achieved for every layer is when it is fetched by a user belong to the set with the highest priority level. This will obey the priority order, and it will not use a lower priority user to fetch a layer that can be fetched by a higher priority user.

%
\if0
\begin{equation}
 \lambda_0^{(1)} > \lambda_0^{(2)} >...> \lambda_0^{(N)}>  \lambda_1^{(1)}......> \lambda_U^{(N)}
 \end{equation}

  \begin{equation}
 \lambda_a^{k} > \lambda_a^{k^{prime}}
 \end{equation}
  \fi

Accordingly, we formulate the objective function as the following.

\begin{equation}
\sum_{k=1}^K\sum_{n=0}^{N}\lambda_n^{(k)} G_{n}^{(k)}.
\label{equ:mainObj}
\end{equation}

This objective is constrained by a set of constraints such as bandwidth, deadline, users maximum contribution, users-set assignment, and video decoding constraints. Accordingly, the proposed optimization problem can be formulated as follows.

\begin{eqnarray}
{\textbf Maximize: }\sum_{k=1}^K\sum_{n=0}^{N}\lambda_n^{(k)} G_n^{(k)} 
\label{equ:eq1}
\end{eqnarray}
subject to
\begin{eqnarray}
G_n^{(k)} &=&\sum_{u=1}^{U_k}\sum_{i=1}^{C}D_{n,u}^{(k)}(i) \forall n,k	\label{equ:c0eq1}\\
	D_{n,u}^{(k)}(i) &=& \sum_{j=1}^{(i-1)L+s} z_{n,u}^{(k)}(i,j)  \forall i,  n, k, u
	\label{equ:c1eq1}
\end{eqnarray}

\begin{eqnarray}
 \sum_{k=1}^K \sum_{u=1}^{U_k}D_{n,u}^{(k)}(i) = Z_{n,i}\quad  \forall i,  n 
\label{equ:c2eq1}
\end{eqnarray}
\begin{eqnarray}
Z_{n,i}\le \frac{Y_n}{Y_{n-1}}Z_{n-1,i}\quad  \forall i,  n>0 
\label{equ:c3eq1}
\end{eqnarray}
\begin{eqnarray}
\sum_{n=0}^N\sum_{i=1}^{C} z_{n,u}^{(k)}(i,j)  \leq B_u^{(k)}(j) \  \   \forall k,u,\text{ and } j
\label{equ:c4eq1}
\end{eqnarray}
\begin{align}
&D_{n,u}^{(k)}(i) D_{n,u'}^{(k')}(i) = 0 \forall n, i, \text{ and } (k,u)\ne (k',u') 
\label{equ:c5eq1}
\end{align}
\begin{equation}
\sum_{i} D_{n,u}^{(k)}(i) \leq \eta_u^k    \forall u,k
\label{equ:c6eq1}
\end{equation}
\begin{equation}
D_{n,u}^{(k)}(i) =0    \forall k, n > N_k
\label{equ:c6eq1_2}
\end{equation}
\begin{equation}
z_{n,u}^{(k)}(i,j) \geq 0\   \forall u, \forall i 
\label{equ:c7eq1}
\end{equation}
\begin{equation}
z_{n,u}^{(k)}(i,j)= 0\   \forall \{i: (i-1)L+s > j\}, \forall u
\label{equ:c8eq1}
\end{equation}
\begin{equation}
Z_{n,i} \in {\mathcal Z}_n\triangleq \{0, Y_n\} \quad  \forall i, n
\label{equ:c9eq1}
\end{equation}

In the above formulation,  
Constraints \eqref{equ:c0eq1}, \eqref{equ:c1eq1}, \eqref{equ:c2eq1} and \eqref{equ:c9eq1} ensure that what is fetched for layer $n$ of chunk $i$ over all links and times to be either zero or $Y_n$. Constraint  \eqref{equ:c3eq1} imposes the decoding constraint, \ie it ensures that the $n^{th}$ layer of any chunk cannot be fetched if the lower layer is not fetched.
Constraint \eqref{equ:c4eq1} imposes the bandwidth constraint of all links at each time slot $j$.
Constraint \eqref{equ:c5eq1} enforces that a chunk's layer can be fetched only over one link. Constraint \eqref{equ:c6eq1} imposes that the maximum amount of data fetched from link ${\bf L}_{k,u}$ is $\eta_u^k$. Constraint \eqref{equ:c6eq1_2} enforces that user $u$ at the $k^{th}$ set cannot be used to fetch layers higher than $N_k$. Constraint \eqref{equ:c7eq1} imposes the non-negativity of the download of a chunk and \eqref{equ:c8eq1} imposes the deadline constraint since chunk $i\in \{1, \cdots, C\}$ cannot be fetched after its deadline ($deadline(i)=(i-1)L+s$). Recall that $s$  is the initial startup delay.

The problem defined in \eqref{equ:eq1}-\eqref{equ:c9eq1} has integer constraints and a non-convex constraint, \ie Equation \eqref{equ:c5eq1}. Integer-constrained problems are in the class of discrete optimization, which are known to be NP hard in general \cite{nemhauser1988integer}. 

\section{GroupCast Algorithm for Skip-based streaming}\label{skipalgoEqual}
 
 
In this section, we describe the proposed algorithm for skip-based streaming. To develop the algorithm, we consider the offline algorithm, \ie the bandwidth is perfectly known for the whole period of the video. We first assume one set of users, \ie $K=1$, where there is no preference among different users. We refer to this algorithm as ``Offline No-Pref GroupCast". This algorithm is, then extended to the general case in which users can belong to different sets with different priority levels. We refer to this algorithm as ``Offline Pref GroupCast".   Finally, we propose online algorithm (Online Pref/No-Pref GroupCast) in which more practical assumptions are considered such as short bandwidth prediction with prediction error and finite buffer size.
  

\subsection{Offline No-Pref GroupCast Algorithm }\label{skipalgoNoPref}

We now describe Offline No-Pref GroupCast algorithm, and for more illustration of the algorithm we have included an  example in Appendix~\ref{ex1}. In No-Pref GroupCast, We assume that there is only one set that contains $U$ users. The No-Pref GroupCast algorithm (Algorithm 1) process the layers sequentially according to their order (Line 4). The first layer to consider is the base layer where the algorithm initially finds the cumulative bandwidth of every second $j$ and user $u$ ($R^{(u)}(j)$) (Line 5). Then, it makes the decision for the base layers of all the chunks, \ie which chunks to be skipped and which to be fetched. Moreover, the algorithm decides which user is used to fetch the base layer of every chunk that has been decided to be fetched. The algorithm performs forward scan and finds the maximum number of base layers that can be fetched before the deadline of every chunk $i$ ($V_{0,i}$). The maximum number of base layers that can be fetched before the deadline of the $i^{th}$ chunk is: $V_{0,i}=\sum_{u=1}^{(U)} \floor{\min(\frac{\eta_u}{y_0},\frac{R^u(deadline(i))}{y_0})}$(Line 6). Let $skip(i)$ be the total number of skips before the deadline of the chunk numbered $i$. Therefore, if $V_{0,i}$ is less than $i-skip(i-1)$ at the deadline of the $i^{th}$ chunk, there must be a skip/skips, and the total number of skips from the start until the deadline of chunk $i$ will be equal to $skip(i)=skip(i-1)+1$(Lines 9-11). If there are $A$ skips, the algorithm will always skip the first $A$ chunks since they are the closest to their deadlines. Thus, skipping them will result in a bandwidth that can be used by all of the remaining chunks to increase their quality to the next layer. This choice maximizes the total available bandwidth for the later chunks. Note that the number of skips could increase if the maximum contribution of all users is used before downloading all chunks.

In the second step, the algorithm performs backward scan per chunk (Lines 15-24) starting from the first chunk that was decided to be fetched by calling the Backward Algorithm (Line 20, Algorithm 2). The backward algorithm simulates fetching every chunk $i$ starting from its deadline and by every user that has $min(R^{(u)}(deadline(i)),\eta_u^k) \geq Y_0$(Lines 16, 18, and 20).

Before we describe the second step into details, we define some parameters. Let $\alpha_i^n(u)$ be the amount of bandwidth used to fetch the layer $n$ of chunk $i$ by user $u$ before the deadline of chunk numbered $i-1$. Let $\zeta_i^n(u)$ be the cost of fetching the layer $n$ of chunk $i$ by user $u$, and $\zeta_i^n(u)$ can be found as follows:

\begin{equation}\label{equ:cost}
\left\{\begin{array}{l}
\zeta_{i}^n(u)=\alpha_i^n(u), \text{if $Y_n$ can be fetched by user $u$}\\
\zeta_{i}^n(u)=\infty,  \text{ otherwise}\\
\end{array}\right.
\end{equation}

With $\zeta_i^n(u)$ being defined, we describe the second step of the algorithm. The algorithm performs backward scan per chunk starting from the first chunk  by calling Algorithm 2 (line 20). The backward scan simulates fetching every chunk $i$ starting from its deadline and by every user. The algorithm computes the the cost of fetching the base layer of chunk $i$ by every user $u$ ($\zeta_i^n(u)$). The user choice that minimizes the cost is chosen to fetch the base layer of chunk $i$. Note that the link over which the chunk $i$ will be fetched is the one that gives the maximum amount of total available bandwidth over all links before the deadline of chunk numbered $i-1$.  For example, assume that there are only 2 users, and consider that fetching the $i^{th}$ chunk by user 1 results in using  $x$ amount of the bandwidth before the deadline of $i-1^{th}$ chunk while fetching the $i^{th}$ chunk by user 2 results in using  $y$ amount of the bandwidth before the deadline of $i-1^{th}$ chunk. Then, the first user will be chosen to fetch the chunk $i$ if $x<y$. The objective is to free as much as possible of early bandwidth since it will help more chunks to fetch their higher layers because it comes before their deadlines.

{\bf Enhancement layer modifications}: The algorithm proceeds by performing forward-backward scan per enhancement layer in order. The bandwidth is now modified to be the remaining bandwidth after excluding whatever has been reserved to fetch lower layers (Line 6 in Algorithm 2).  Also note that the $n^{th}$ layer for a chunk is not considered if its $n-1^{th}$ layer is not decided to be fetched (Line 17 in Algorithm \ref{algo:mpsvc}).

\begin{figure}
    \scalebox{0.95}{
		\begin{minipage}{\linewidth}
			\begin{algorithm}[H]
			\small
				\begin{algorithmic}[1]
				\STATE {\bf Input:}  $\mathbf{Y}=\{Y_n\forall n\}$, $L$, $s$, $C$, $\eta_u$, $\mathbf{B}^{(u)}=\{B^{(u)}(j)\forall j\}$, $u=1,\cdots,U$. 
				\STATE {\bf Output:} $I^{(u)}_n$, $u=1,\cdots,U$: set containing the indices of the chunks that can have their $n$th layer fetched by user $u$.	
								
   \STATE  $deadline(i)=(i-1)L+s \quad \forall i$
  
    \FOR{each layer $n = 0, \cdots, N$}   
     \STATE $R^{(u)}(j)=\sum_{j^\prime=1}^{j} B^{(u)}(j^\prime)$,$\forall j,u$
	\STATE $V_{n,i}=\sum_{u=1}^U\floor{\min(\frac{\eta_u}{y_n},\frac{R^u(deadline(i))}{y_n})}$, $\forall i$
  		 \STATE $skip(0)=0$
     \FOR{$i=1:C$}
       \IF{$V_{n,i} < i-skip(i-1)$ or ($i \notin  I_{(n-1)}^{(1)}$ and,$\cdots$, $i \notin I_{(n-1)}^{(U)}$)}
      \STATE $skip(i)=skip(i-1)+1$
      \ENDIF
      \ENDFOR
        \STATE Skip the first $skip(C)$ chunks.
        \STATE $i^\prime=skip(C)+1$: the index of the first chunk to fetch

    \FOR{$i=i^\prime:C$}
    \STATE $j_u =deadline(i),u=\in \{1,\cdots,U\}$
    	\IF{($n=0$ or $i \in I_{n-1}^{(1)}$ or $\cdots$, $i \in I_{n-1}^{(U)})$}
							     \STATE ${B2}_{u}={B}_{u}$, $t=deadline(i-1)$, $\eta_{t,u}=\eta_u$
      				\STATE $[B2_{u},\zeta_i^n(u), \eta_{t,u}]=$ Backward$(u,j_u, \mathbf{B2}_{u}, Y_n, t, \eta_{t,u}) \forall u$
			\STATE $u_1=\arg\min (\zeta_i^n)$
      				\STATE $I_n^{(u_1)}=I_n^{(u_1)} \cup i$, $\mathbf{B}_{(u_1)}=B2_{(u_1)}$, $\eta_{u}=\eta_{t,u}$

	\ENDIF
      \ENDFOR
     \ENDFOR
   				\end{algorithmic}
				\caption{Offline No-Pref GroupCast Algorithm }\label{algo:mpsvc}
			\end{algorithm}
		\end{minipage}
	}
	\end{figure}

\begin{figure}
		\vspace{-.1in}
		\begin{minipage}{\linewidth}
			\begin{algorithm}[H]
				\small
				\begin{algorithmic}[1]
				\STATE {\bf Input:} $u,j, B2, Y_n, t, \eta_{t,u}$
					\STATE {\bf Output:} $\zeta_i^n(u)$the cost of fetching layer $n$ of chunk $i$ by user $u$, $B2$ is the residual bandwidth after fetching chunk $i$, $\eta_{t,u}$: the remaining contribution of user $u$
					
    \STATE {\bf Initialization:}
   $\zeta_i^n(u)=0$
     \WHILE {($Y_n > 0$)}
            \STATE $fetched=min(B2(j), \eta_{t,u}, Y_n)$
             \STATE $B2(j)=B2(j)-fetched$, $Y_n=Y_n-fetched$, 
             \STATE $\eta_{t,u}=\eta_{t,u}-fetched$
            \STATE {\bf if } {$(j \leq t)$} {\bf then }  $\zeta_i^n(k)=\zeta_i^n(u)+fetched$
	   \STATE {\bf if } {$(B2(j)=0)$} {\bf then }  $j=j-1$            	
	    \STATE {\bf if } {$j < 1$ and $Y_n > 0$} {\bf then }  $\zeta_i^n(u)=\infty$, break
\ENDWHILE
				\end{algorithmic}
				\caption { Backward Algorithm }
			\end{algorithm}
		\end{minipage}
		\vspace{-.2in}
	\end{figure}

{\bf Chunks' Download: }During the actual fetching of the chunks,  each link fetches the layers in order of the chunks. In other words, the $n$-th layer of chunk $i$ is fetched before the $m$-th layer of chunk $j$ if $i < j$. Moreover, for the same chunk, the layers are fetched according to their order. For example, the base layer is the first layer to download since if it is not received by the chunk's deadline, the chunk can not be played. Moreover, none of the higher layers received can be decoded if the base layer is not received by the chunk's deadline.

{\bf Complexity Analysis}:  The algorithm sequentially decides each layer. For each layer $n$, the algorithm first finds $r^{(u)}(j)$, the cumulative bandwidth of every link $u$ and time slot $j$,  which is linear with respect to every link. Thus, it has a run-time complexity $O(U\cdot deadline(C))$. Then, it  performs forward scan that has a run-time complexity $O(U\cdot deadline(C))$. Finally, a backward scan on each link at each time. Since the complexity of backward algorithm is linear in $C$ with respect to every link, the run-time complexity for a layer is $O(U\cdot C^2)$. Thus, the overall run-time complexity is $O(U\cdot N\cdot C^2)$. 
 
 We can show that this algorithm is optimal in two cases. First, if all chunks are encoded in base layer only, and secondly when the video is encoded into multiple layers but there is only one user (single user system). The second follows by an adaptation of results in \cite{AnisTONSingle}. The detailed proofs are omitted here since they are not  interesting special cases of the problem. 
 

Even though we do not show the optimality of No-Pref GroupCast in general, we note that this algorithm  minimizes the number of the $n$-th layer skips given the decisions of all lower layers. This result is formally given in the following theorem. 

\begin{theorem}
Given size decisions up to $n-1^{th}$ layer ($Z_{0,i}, \cdots,Z_{n-1,i}$, for all $i$), remaining bandwidth, and $deadline(i)$ for every chunk $i$, the proposed No-Pref GroupCast algorithm achieves the minimum number of $n^{th}$ layer skips (or obtains the maximum number of chunks at layer $n$) as compared to any feasible algorithm which fetches the same layers of every chunk up to layer $n-1$.
 \label{them:skip1} 
\end{theorem}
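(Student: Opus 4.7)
The plan is to prove the theorem in two pieces: a lower bound establishing that any feasible algorithm must incur at least $skip(C)$ layer-$n$ skips, and a matching achievability argument showing that No-Pref GroupCast attains this number. Since the decisions $Z_{0,i},\ldots,Z_{n-1,i}$ are held fixed, the residual bandwidth $B^{(u)}(j)$, the cumulative bandwidth $R^{(u)}(j)$, and the contribution caps $\eta_u$ for the layer-$n$ subproblem are all well defined, and $V_{n,i}$ is the maximum number of layer-$n$ chunks of size $Y_n$ that could possibly be delivered to all users by time $deadline(i)$.

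For the lower bound I would use induction on $i$. Let $s^\star(i)$ denote the number of layer-$n$ skips among the first $i$ chunks in any feasible schedule that agrees with the algorithm on layers $0,\ldots,n-1$. The per-user cap $\eta_u$ and cumulative bandwidth $R^{(u)}(deadline(i))$ each upper-bound the number of whole layers of size $Y_n$ that user $u$ can deliver by time $deadline(i)$, so summing over users yields $i - s^\star(i) \le V_{n,i}$. The inductive step then splits on whether the algorithm increments $skip$ at step $i$. If it does not, monotonicity of $s^\star$ together with the inductive hypothesis yields $s^\star(i)\ge s^\star(i-1)\ge skip(i-1)=skip(i)$. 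If it does, then $V_{n,i}<i-skip(i-1)$, so $s^\star(i)\ge i-V_{n,i}>skip(i-1)$, and therefore $s^\star(i)\ge skip(i-1)+1=skip(i)$.

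For achievability I need to show that after skipping the earliest $skip(C)$ chunks, the forward-backward scan succeeds in delivering every remaining chunk's $n$-th layer before its deadline while respecting per-slot bandwidth and per-user contribution caps. The rule of skipping the earliest rather than any later candidates is justified by an exchange: any bandwidth or contribution that was available before $deadline(i)$ is also available for any chunk whose deadline is later than $deadline(i)$, so rescheduling a later skip into an earlier one can only enlarge the set of fetchable chunks. Given this, I would argue inductively over the remaining chunks $i=i',\ldots,C$ that the Backward routine's choice of user minimizing $\zeta_i^n(u)$ preserves feasibility of all subsequent chunks: any alternative feasible completion can be transformed by a swap that reassigns chunk $i$ to the algorithm's choice while pushing the displaced fetches into slots freed at or after $deadline(i-1)$. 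Because the minimum-$\zeta$ link is, by definition, the one that consumes the least bandwidth strictly before $deadline(i-1)$, the swap can only weakly increase the residual early bandwidth and the unused $\eta_u$ available on every link for subsequent chunks.

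The main obstacle is the achievability half. The lower bound is essentially bookkeeping, but achievability must track two coupled budgets, the per-slot capacity $B^{(u)}(j)$ and the per-user contribution $\eta_u$, and show that the greedy minimum-cost rule never paints the algorithm into a corner. The heart of the argument is the swap lemma sketched above: comparing two candidate assignments of chunk $i$ reduces to comparing their bandwidth consumption before $deadline(i-1)$, because any slot at or after $deadline(i-1)$ is unusable by later chunks anyway and can therefore be reshuffled freely without affecting the rest of the schedule. Once this swap is justified, the induction produces a feasible schedule matching the skip count of No-Pref GroupCast, which, combined with the lower bound, proves the theorem.
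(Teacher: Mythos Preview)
Your lower-bound argument is essentially the paper's proof, only more carefully stated: the paper argues informally that whenever the forward scan forces a skip at chunk $i$ (because $V_{n,i}<i-skip(i-1)$), any feasible algorithm must also skip some chunk with index at most $i$, so the algorithm's skip count is a lower bound. Your induction on $i$ formalizes exactly this.

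Where you diverge is that you go on to prove achievability---that the backward scan actually succeeds in delivering every non-skipped chunk---via an exchange argument on the minimum-$\zeta$ rule. The paper's proof \emph{omits this direction entirely}: it only shows that $skip(C)$ is a lower bound and then declares the result proved, implicitly taking for granted that the algorithm realizes this bound. So your proposal is strictly more complete than the paper's own argument. The extra work you put in (the swap lemma tracking both per-slot bandwidth and per-user contribution caps) is real content the paper does not supply; if you want to match the paper's level of rigor, the lower bound alone suffices, but your fuller treatment is the more honest proof of the theorem as stated.
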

\begin{proof}
Proof is provided in Appendix \ref{apdx_skip1}.
\end{proof}
\if0
\begin{lemma}
Given size decisions up to $n-1^{th}$ layer ($Z_{0,i}, \cdots,Z_{n-1,i}$, for all $i$), running the proposed algorithm for the $n$-th layer decisions provides the maximum total remaining bandwidth for every chunk $i$ before its deadline for decisions of layers $> n$ as compared to any feasible algorithm which fetches the same layers of every chunk up to layer $n?1$.

\label{lem:skip2}
\end{lemma}
\begin{proof}
Proof is provided in Appendix \ref{apdx_skip2}.
\end{proof}
\fi


 \if 0
 \subsection{Optimality of No-Pref-GroupCast}\label{optim_skip}

In this section, we prove that the proposed No-Pref-GroupCast algorithm achieves the optimal solution for the link assignment and the quality decision for the problem specified by Equations \ref{equ:eq1}-\ref{equ:c9eq1} when $K=1$. Since $K=1$, we ignore the index related to $k$ in the proofs. 


We note that given two strategies with the same number of chunks fetched at each layer from $0$ to $n-1$, among all remaining layers (layers $n$ to $N$), the strategy that obtains $m^{th}$ layer for more chunks will achieve higher objective. This follows from the constraint for the weights, given in Equation \eqref{basic_gamma_1} since the contribution to the objective of one chunk at lower layer is higher than all the chunks at the higher layers. The following result shows that the proposed algorithm minimizes the number of skips for a given layer among all the other algorithms that fetch the same quality levels up to the immediate next lower layer. 


\begin{lemma}
Given size decisions up to $n-1^{th}$ layer ($Z_{0,i}, \cdots,Z_{n-1,i}$, for all $i$), remaining bandwidth, and $deadline(i)$ for every chunk $i$, the proposed algorithm achieves the minimum number of $n^{th}$ layer skips (or obtains the maximum number of chunks at layer $n$) as compared to any feasible algorithm which fetches the same layers of every chunk up to layer $n-1$.
 \label{lem:skip1} 
\end{lemma}
\begin{proof}
Proof is provided in Appendix \ref{proofskipl1}.
\end{proof}

Intuitively, if there must be skips, then the proposed algorithm brings all the skips as early as possible. Therefore, since the earlier chunks will be closer to their deadlines, skipping them offers more bandwidth to the later ones. Moreover, skipping minimizes the number of current layer skips and maximize the number of candidate chunks, \ie chunks that have all previous layers fetched, to the next layer. 
The next result demonstrates that the proposed algorithm reserves the maximum possible bandwidth before the deadline of every chunk that is a candidate to the next layer.

\begin{lemma}
Among all algorithms with the same number of $n^{th}$ layer skips, the proposed algorithm leaves the largest possible bandwidth for fetching the $n+1^{th}$ layer of every candidate chunk before its deadline. In other words, the proposed algorithm maximizes the resources to the higher layer among all algorithms that have same decisions up to the current layer.

\label{lem:skip2}
\end{lemma}
\begin{proof}
Proof is provided in Appendix \ref{apdx_skip2}.
\end{proof}

Intuitively, our proposed algorithm optimizes the bandwidth resources to make more resources available for later layers. This is essential for the optimality of the algorithm since if the decision of the lower layer does not help giving more resources for the higher layers, then the next layer decision will not be optimal. Using Lemma \ref{lem:skip1} and Lemma \ref{lem:skip2}, the following result shows the optimality of No-Pref-GroupCast algorithm in solving problem~(\ref{equ:eq1}-\ref{equ:c9eq1}).

\begin{theorem}	
Up to a given enhancement layer
 $M, M \geq 0$, if ${D_{m,u}(i)}^*$ is the size of the $m^{th}$ layer of chunk $i$  using link $u$ ($m \leq M$)  that is found by running No-Pref-GroupCast algorithm, and ${D_{n,u}(i)}^\prime$ is a size that is found by any other feasible algorithm such that all constraints are satisfied, then the following holds when $\lambda$'s satisfy \eqref{basic_gamma_1}.
\begin{equation}
 \sum_{u=1}^U\sum_{n=0}^{M}\lambda_n\sum_{i=1}^{C} {D_{n,u}(i)}^\prime \leq \sum_{u=1}^U\sum_{n=0}^{M}\lambda_n\sum_{i=1}^{C} {D_{n,u}(i)}^*.
\label{thm:thm}
\end{equation}
In other words, No-Pref-GroupCast finds the optimal solution to the optimization problem~(\ref{equ:eq1}-\ref{equ:c9eq1}) when $\lambda$'s satisfy \eqref{basic_gamma_1} and $K=1$.
\label{theorem: theorem11}
\end{theorem}
\begin{proof}
Proof is provided in Appendix \ref{noprefthm}.
\end{proof}

\fi

\subsection{Offline Pref-GroupCast Algorithm}
\label{sec:pref}

\begin{figure}
	\vspace{-.1in}
 \scalebox{0.95}{	\begin{minipage}{\linewidth}
		\begin{algorithm}[H]
			\small
			\begin{algorithmic}[1]
					\STATE {\bf Input:}  $\mathbf{Y}=\{Y_n\forall n\}$, $L$, $s$, $C$, $\mathbf{B}_{u}^{(k)}=\{B_{u}{(k)}(j)\forall j\}$, $k=1....K$, $u=1,..,U_k, \forall k$. 
				\STATE {\bf Output:} $I^{(u,k)}_{n}$, $n=0,...,N$: set containing the indices of the chunks that can have their $n$th layer fetched by user $u$ in set $k$.	
	\STATE $\chi=\{1,....,K\}$: set of all user sets
			\STATE $lowest=K$
			\STATE $p=0$
			\WHILE{  $2 \in \chi$}			
				\STATE m=max layer index that users belong to set $lowest$ can fetch
				\STATE $[I^{(u,k)}_{n},\mathbf{B}_{u}^{(k)}, n=p\cdots m]=$No-Pref GroupCast($\mathbf{Y}$, $L$, $s$, $\mathbf{B}_{u}^{(k)}$, $k=1 \cdots lowest$, $u=1 \cdots U_k$)

				\STATE $\mathbf{Y2}=\{Y_{n,i}\forall n,i \in I_n^{(lowest)}\}$		
				
				\STATE $[I_n^{(k^{\prime})},\mathbf{B}_{u}{(k^\prime)}, n=0\cdots n_2, k^\prime=1 \cdots lowest-1]=$No-Pref GroupCast($\mathbf{Y2}$, $L$, $s$, $\mathbf{B}_{u}^{(k^{\prime})}$)
				
				\FOR{$i=1:C$}
				\FOR {$u=1:U_lowest$}
					\IF{$(i \in I_n^{(u,lowest)}$ and $i \in I_n^{(u \in k^{\prime},k^{\prime})})$}
						\STATE $I_n^{(u, lowest)}=I_n^{(u,lowest)}-\{i\}$
						\ENDIF
				\ENDFOR
				\ENDFOR
		\STATE $\chi=\chi-\{lowest\}$
				\STATE $lowest=lowest-1$
				\STATE $p=m+1$
			\ENDWHILE

			\STATE $[I_n^{(u,1)},\mathbf{B}_{u}^{(1)}, n=p\cdots N]=$No-Pref GroupCast($\mathbf{Y_n}, n\geq p$, $L$, $s$, $\mathbf{B}_{u}^{(1)}$)
			\end{algorithmic}
			\caption{Offline Pref GroupCast Algorithm }
			\label{pref-algo}
		\end{algorithm}
	\end{minipage}
}
\end{figure}

\if0

\begin{figure}
	\vspace{-.1in}
 \scalebox{0.95}{	\begin{minipage}{\linewidth}
		\begin{algorithm}[H]
			\small
			\begin{algorithmic}[1]
					\STATE {\bf Input:}  $\mathbf{Y}=\{Y_n\forall n\}$, $L$, $s$, $C$, $U$, $\mathbf{B}_{u}{(k)}=\{B_{u}{(k)}(j)\forall j\}$, $k=1....K$, $u=1,..,U_k, \forall k$. 
				\STATE {\bf Output:} $I^{(u,k)}_{n}$, $n=0,...,N$: set containing the indices of the chunks that can have their $n$th layer fetched by user $u$ in set $k$.

			\STATE $\chi=\{1,....,K\}$: set of all user sets
			\STATE $chunks=\{1....C\}$
			\STATE $lowest=K$
			\STATE $p=1$
			\WHILE{  $2 \in \chi$}
				\STATE m=max layer index that users belong to set $lowest$ can fetch
				\STATE $[I^{(u,k)}_{n},\mathbf{B}_{u}^{(k)}, y_n^{(u,k)}\forall k=1 \cdpts lowest, n=p \cdots m]=$offline No-Pref GroupCast($\mathbf{Y}$, $L$, $s$, $chunks$, $\mathbf{B}_{u}^{(k)}$)
				\STATE $chunksLow=$chunks that have layers which are assigned to users belong set $lowest$ to fetch
				\STATE $[I^{(u,k)}_{n},\mathbf{B}_{u}^{(k)}, y_n^{(u,k)}\forall k=1...lowest-1, n=p \cdpts m]=$offline No-Pref GroupCast($\mathbf{Y}$, $L$, $s$, $chunksLow$, $\mathbf{B}_{u}^{(k)}$)
				\FOR{$i \in chunksLow$}
					\STATE $u1=$ index of the user in set $lowest$ that was assigned chunk $i$ to fetch from the first call of offline No-Pref GroupCast
					\STATE $u2=$ index of the user in set $k \in \{1\cdots lowest\}$ that was assigned chunk $i$ to fetch from the second call of offline No-Pref GroupCast
					\STATE $I^{(u1,lowest)}_{n}=I^{(u1,lowest)}_{n}-\{i\}$, $I^{(u2,k)}_{n}=I^{(u2,k)}_{n}\cup\{i\}$
					\ENDFOR
				\STATE $\chi=\chi-\{lowest\}$
				\STATE $lowest=lowest-1$
				\STATE $p=m+1$
			\ENDWHILE
			\STATE $[I^{(u,k)}_{n},\mathbf{B}_{u}^{(k)}, y_n^{(u,k)}, k=1, n=p \cdpts N]=$offline No-Pref GroupCast($\mathbf{Y}$, $L$, $s$, $chunks$, $\mathbf{B}_{u}^{(k)}$)
							\end{algorithmic}
			\caption{Offline Pref GroupCast Algorithm }
			\label{pref-algo}
		\end{algorithm}
	\end{minipage}
}
\end{figure}
\fi

We now describe Pref GroupCast algorithm, and for more illustration of the algorithm we have included an  example in Appendix~\ref{ex2}. In this section, we consider $K$ sets in which the users belong to the first set, \ie users $1$ to $u^\prime$, have unlimited data plan and can contribute as much as their bandwidths allow. On the other hand, the remaining users, \ie users $u^\prime+1$ to $U$ have different preferences and can belong to different sets. For example, some of them can contribute in fetching up to the $n^{th}$ enhancement layer, and some others can only help in avoiding skips due to their limited plans. The later ones will belong to the set numbered $K$ since sets are ordered according their priority levels (willingness to contribute). The proposed Offline Pref GroupCast algorithm is listed in Algorithm \ref{pref-algo}. 

The first step in the algorithm is calling the No-Pref GroupCast considering the lowest set of layers that all users can fetch (Line 7-8). We denote these layers by layers $0$ to $m$. The objective of the first call of Offline No Pref-GroupCast is to find the maximum layer less than or equal to $m$ that can be fetched for every chunk. The initial fetching policy obtained from this run is not final since the preference has not been taken into consideration. Recall that the lower priority links will not be used to fetch even the the lowest set of layers, \ie $0,\cdots,m$, if all higher priority links can do so. In the second step, No-Pref GroupCast is re-run again for layers $0$ to $m$ that were initially decided to be fetched by the users belong to the lowest priority set (set $K$) but without considering this set (Lines 9-10). The objective of the second call of No-Pref GroupCast is to move as much as possible of these layers to users belong to the higher priority sets. In the next step, the lowest priority set among the remaining sets is excluded (Line 18), and the same process is repeated. At the last step of the algorithm, No-Pref GroupCast Algorithm is run for the remaining layers considering only the users belonging to the highest priority set (set 1). 
\if0
{\bf Special Case: }A special case to Pref-GroupCast is when there are two sets of users: i) users $1$ to $u^\prime$  belong to the first set and have unlimited plans and i) users from $u^\prime+1$ to $U$ are belong to the 2nd set. The users in the first set can contribute as much as their available bandwidths allow, while the users in the second set can only help in avoiding skips due to their limited plans. \ie $N_1=N$ and $N_2=0$. We call this special case ``{\bf Avoid-Skips GroupCast}''. ``{\bf Avoid-Skips GroupCast}' is shown to achieve the optimal solution to the proposed optimization problem when every set has 1 user. The proof of the optimality is explained in Appendix~\ref{apdx_pref}.
\fi
\if0
\begin{lemma}
	Avoid-Skips GroupCast Algorithm achieves the minimum number of base layer skips.
	\label{lem:avoid_skips}
\end{lemma}
\begin{proof}
Proof is provided in  Appendix \ref{proofPref0}. 
\end{proof}

\begin{lemma}\label{lem_lower}
	Among two strategies with the same number of chunks fetched at base layer, the one that obtains lower content over less preferable links is preferred.
	\label{lem:avoid_skips2}
\end{lemma}
\begin{proof}
Proof is provided in  Appendix \ref{proofPref1}. 
\end{proof}

\begin{lemma}
	Avoid-Skips-GroupCast Algorithm achieves the minimum number of base layer skips with minimum data usage from the users in the lower priority set. In other words, no other algorithm can achieve less data usage for users belonging to the lower priority set with the same number of skips as that achieved by Avoid-Skips-GroupCast.
	\label{lem:avoid_skips3}
\end{lemma}
\begin{proof}
Proof is provided in  Appendix \ref{proofPref2}. 
\end{proof}


\begin{lemma}
	Among all algorithms with the same number of skips, the proposed algorithm leaves the largest possible bandwidth over the links $1$ to $u^\prime$ for fetching the enhancement layers of every candidate chunk before its deadline. In other words, the proposed algorithm maximizes the resources to the higher layer among all algorithms that have same base layer decisions.
	\label{lem:avoid_skips4}
\end{lemma}
\begin{proof}
Proof is provided in  Appendix \ref{proofPref3}. 
\end{proof}
Combining these results, the following theorem shows the optimality of the proposed algorithm. 
\begin{theorem}	
	Up to a given enhancement layer
	$M, M \geq 0$, if ${z_{m}^{(u)}(i,j)}^*$ is the size of the $m$th layer fetched by the $u$th use ($m \leq M$) of chunk $i$ that is found by running Avoid-Skips GroupCast algorithm, and ${z_{n}^{(u)}(i,j)}^\prime$ is a size that is found by any other feasible algorithm such that all constraints are satisfied, then the following holds when $\lambda$'s satisfy \eqref{basic_gamma_1}.
	\begin{align}
	& \sum_{k=1}^K \sum_{u=1}^{U_k}\sum_{n=0}^{M}\lambda_n^{k}\sum_{i=1}^{C}\sum_{j=1}^{dealine(C)} {D_{n,u}^{(k)}(i)}^\prime\nonumber \\& \leq  \sum_{k=1}^K \sum_{u=1}^{U_k}\sum_{n=0}^{M}\lambda_n^{k}\sum_{i=1}^{C}\sum_{j=1}^{dealine(C)} {D_{n,u}^{(k)}(i)}^*.
	\label{thm:thm_avoidSkips}
	\end{align}
	In other words, Avoid-Skips-SVC ffinds the optimal solution to the optimization problem~(\ref{equ:eq1}-\ref{equ:c9eq1}) when $\lambda$'s satisfy \eqref{basic_gamma_1} for the Avoid-Skips scenario.
	\label{theorem: theorem1}
\end{theorem}
\begin{proof}
Proof is provided in  Appendix \ref{Prefproof}. 
\end{proof}
\fi

\subsection{Online Adaptations for GroupCast}
\label{sec:bw_err}

For the algorithms described in Sections \ref{skipalgoNoPref} and \ref{sec:pref}, we assumed a perfect bandwidth prediction, and the client buffer capacity is unlimited. However, practically, the prediction will not be perfect, and the client buffer might be limited. In fact, even if the buffer is unlimited the video content itself may not be available for a few chunks ahead. Therefore, the buffer constraint is considered to account for both scenarios, \ie limited client buffer and the availability of future chunks. In this section, we will use an online algorithm that will obtain prediction per user for a window of size $W$ chunks ahead and make decisions based on the prediction. The $W$ chunks start from chunk $i$ that has the current time $\delta$ higher than its deadline. \ie $deadline(i)=$current time + $\delta$. Typically, $\delta$ is 1-2 seconds. We have some marging $\delta$ to avoid making a decision for a chunk at its deadline and end up not having its decided layers fully downloaded by its deadline due to prediction error. 

There are multiple ways to obtain the prediction. Our approach is to use the harmonic mean of the download time of the past $\beta$ layers to predict the future bandwidth~\cite{chen2016msplayer}. The decisions are re-computed for the chunks that have not yet reached their deadlines periodically every $\alpha$ seconds. Typically, we set $\alpha$ to be 2-3 seconds. To account for the buffer, we assume that $WL+s$ is no more than the buffer duration. 

For the next $W$ chunks, the algorithm described in Sections \ref{skipalgoNoPref} and \ref{sec:pref} are run to find the quality using the predicted bandwidth. These  $W$ chunks can then  be fetched according to the decision generated by the algorithm. After  $\alpha$ seconds, the optimization problem is re-run, and the decisions are re-considered and updated. The fetching policy is updated on fly without interrupting the download of the current set of layers.

Since the maximum contributions are defined for the whole period of the video, the controller is required to efficiently use the maximum contribution per user over the entire duration of the video. In essence, at each time $c\alpha$, where $c \in \{1,2, \dots \}$, the controller decides the maximum contribution of user $u$ for the current $W$ chunks. Let $\eta_{u}$ be the maximum contribution of user $u$ for the entire video duration of $T$, \ie $T = (i-1)L+s+d(C)$. Further, let $f_{u,c}$ be the amount of content fetched by user $u$ before the time $c\alpha$. Then, the maximum contribution of user $u$ for the window starting at $c\alpha$, denoted  as $\tilde{\eta}_{u,c}$,  is given as

\begin{equation}
\tilde{\eta}_{u,c}=\frac{\min(WL+c\alpha L,T)}{T}\cdot \eta_{u} - f_{u,c}.
\label{equ:conPerW2}
\end{equation}

Accordingly, the maximum contribution per user is fairly distributed over time. In other words, users who contributed less than the assigned maximum contribution in the previous windows will contribute more in the current window. The initial contribution is set $\frac{WL}{T}\cdot \eta_u$ in the first window for all users.

\section{No Skip Based  Streaming Algorithm}\label{no_skip}

In no-skip streaming, the video player stalls the video and continues downloading the chunk instead of skipping it when the deadline of the chunk is missed. The objective is to maximize the average quality while minimizing the stall duration. The resulted objective function is slightly different from Equation~\eqref{equ:eq1} because skipping the base layer is not allowed. However, higher layers can be skipped.
All constraints are the same as skip based optimization problem except that we add a new  constraint, \ie Constraint~(\ref{equ:c10eq2}), to enforce that $Z_{0,i}$ to be equal to the size of BL  of the $i^{th}$chunk $\forall i$. Note that $Z_{0,i}$ cannot be zero because we do not allow skipping base layers. Moreover, the deadline of any chunk $i$ is a function of the total stall from the start of the playback until the playback time of chunk $i$, \ie $deadline(i)=(i-1)L+s+d(i)$. 
We define the total stall duration from the start untill the play-time of chunk $i$ as $d(i)$. Therefore, the deadline of any chunk $i$ is $(i-1)L+s+d(i)$.  The objective function is thus given as:
\begin{equation}
\sum_{k=1}^K\sum_{n=0}^{N}\lambda_n^{(k)} G_{n}^{(k)}- \mu d(C)
 \label{equ:eq2}
 \end{equation}
where the weight for the stall duration is chosen such that $\mu \gg \lambda_{0}^1$, because users tend to prefer not running into re-buffering over playing the video in better quality. 
This is a multi-objective optimization problem with quality and stalls as the two objectives, and is formulated as follows.
\begin{eqnarray}
{\bf Maximize: } (\ref{equ:eq2})
\label{equ:eq2_ref}
\end{eqnarray}
subject to  (\ref{equ:c0eq1}), (\ref{equ:c2eq1}), (\ref{equ:c3eq1}), (\ref{equ:c4eq1}), (\ref{equ:c5eq1}), (\ref{equ:c6eq1}), (\ref{equ:c6eq1_2}), (\ref{equ:c7eq1}), (\ref{equ:c9eq1})
\vspace{0.1in}
\begin{eqnarray}
	D_{n,u}^{(k)}(i) = \sum_{j=1}^{(i-1)L+d(i)+s} z_{n,u}^{(k)}(i,j)  \forall i,  n, k, u
	\label{equ:c1eq2}
\end{eqnarray}
\begin{equation}
z_{n,u}^{(k)}(i,j)= 0\   \forall \{i: (i-1)L+s+d(i) > j\}, \forall u,k
\label{equ:c8eq2}
\end{equation}
\begin{equation}
d(i) \geq d(i-1)\geq 0\   \forall i>0 \label{equ:c9eq2}
\end{equation}
\begin{equation}
Z_{0,i} =Y_0
\label{equ:c10eq2}
\end{equation}

As compared to skip based streaming, we do not allow base layer skips as specified by Constraint \eqref{equ:c10eq2}. We note that this problem can be solved using an algorithm similar to that for the skip-based streaming. One difference as compared to the skip version is that the first step of the no-skip algorithm is to determine the minimum stall time such that all chunks are fetched at least at base layer quality since that is the first priority. Therefore, the forward scan of the base layer objective is not to find the minimum number of skips. The No-Pref-Groupcast algorithm runs a base layer forward scan that has the objective of checking if all chunks can be fetched at least at the base layer quality with the current startup delay.  At the deadline of any chunk $i$, if $V_{0,i} < i$, the algorithm increments the deadline of every chunk $ \geq i$ by $1$ and resumes the forward scan (Lines 11-15). The algorithm does not proceed to the next chunk untill the condition of $V(i) \geq i$ is satisfied. At the end, the algorithm sets the final deadline of every chunk $i$ to be: $deadline(i)=(i-1)L+s+d(C)$. Therefore, the base layer forward scan achieves the minimum stall duration and brings all stalls to the very beginning since that will offer bandwidth to more chunks and can help increase the quality of more chunks. The detailed steps are described in Algorithm 5.

The rest of the algorithm is equivalent to the skip version since skips are not allowed only for base layers, \ie higher layers can be skipped. 
The key difference in the no-skip with compare to the skip version is that the startup delay is decided such that there will be no skips. No-skip version can be considered as a special case of the skip version in which all chunks can be fetched at least at the base layer without skipping an entire chunk. The No-Pref version of No-Skip GroupCast is described in Algorithm~\ref{algo:noSkipNoPref}. 

\begin{figure}
    \scalebox{0.95}{
		\begin{minipage}{\linewidth}
			\begin{algorithm}[H]
			\small
				\begin{algorithmic}[1]
				\STATE {\bf Input:}  $\mathbf{Y}=\{Y_n\forall n\}$, $L$, $s$, $C$, $\eta_u$, $\mathbf{B}^{(u)}=\{B^{(u)}(j)\forall j\}$, $u=1,\cdots,U$. 
				\STATE {\bf Output:} $I^{(u)}_n$, $u=1,\cdots,U$: set containing the indices of the chunks that can have their $n$th layer fetched by user $u$.	
								
   \STATE  $deadline(i)=(i-1)L+s \quad \forall i$
  
    \FOR{each layer $n = 0, \cdots, N$}   
     \STATE $R^{(u)}(j)=\sum_{j^\prime=1}^{j} B^{(u)}(j^\prime)$,$\forall j,u$
	\STATE $V_{n,i}=\sum_{u=1}^U\floor{\min(\frac{\eta_u}{y_n},\frac{R^u(deadline(i))}{y_n})}$, $\forall i$
   \IF{$n=0$}
       \FOR{$i=1:C$}
       \STATE {\bf If} $i>1$ {\bf then} $d(i)=d(i-1)$,$deadline(i)=(i-1)L+s+d(i)$
       \STATE $V_{n,i}=\sum_{u=1}^U\floor{\min(\frac{\eta_u}{y_n},\frac{R^u(deadline(i))}{y_n})}$
       \WHILE{($V_{n,i} < i$)}
       		\STATE $d(i)=d(i)+1$
		\STATE $deadline(i)=(i-1)L+s+d(i)$
		\STATE $V_{n,i}=\sum_{u=1}^U\floor{\min(\frac{\eta_u}{y_n},\frac{R^u(deadline(i))}{y_n})}$
	\ENDWHILE
      \ENDFOR
	\STATE  $deadline(i)=(i-1)L+s+d(C), \quad \forall i$ , $i^\prime=1$
	\ELSE
		 \STATE $skip(0)=0$
     \FOR{$i=1:C$}
       \IF{$V_{n,i} < i-skip(i-1)$ or ($i \notin  I_{(n-1)}^{(1)}$ and,$\cdots$, $i \notin I_{(n-1)}^{(U)}$)}
      \STATE $skip(i)=skip(i-1)+1$
      \ENDIF
      \ENDFOR
        \STATE Skip the first $skip(C)$ chunks.
        \STATE $i^\prime=skip(C)+1$: the index of the first chunk to fetch
        \ENDIF

    \FOR{$i=i^\prime:C$}
    \STATE $j_u =deadline(i),u=\in \{1,\cdots,U\}$
    	\IF{($n=0$ or $i \in I_{n-1}^{(1)}$ or $\cdots$, $i \in I_{n-1}^{(U)})$}
				\STATE {\bf if}(${B}_{u} > Y_n$) ${B2}_{u}={B}_{u}$ {\bf else }  ${B2}_{u}={\bf 0}$			     \STATE $t=deadline(i-1)$
      				\STATE $[B2_{u},\zeta_i^n(u)]=$ Backward$(u,i, j_u, \mathbf{B2}_{u}, Y_n, t) \forall u$
			\STATE $u_1=\arg\min (\zeta_i^n)$
      				\STATE $I_n^{(u_1)}=I_n^{(u_1)} \cup i$, $\mathbf{B}_{(u_1)}=B2_{(u_1)}$

	\ENDIF
      \ENDFOR
     \ENDFOR
   				\end{algorithmic}
				\caption{Offline No-Skip No-Pref GroupCast Algorithm }\label{algo:noSkipNoPref}
			\end{algorithm}
		\end{minipage}
	}
	\end{figure}

\section{System Evaluation}

\label{sec:eval}
In this section, we describe the implementation and the evaluation of skip and no-skip versions of GroupCast. However, we only report the results of the skip version because the results for both versions are qualitatively similar. 

\subsection{Evaluation Parameters}
\label{sec:evalpar}

\begin{table}[t!]
  \centering
  \caption{SVC encoding bitrates used in our evaluation}
  \begin{tabular}{|c|cccc|} \hline
    playback layer & BL & EL1 & EL2 & EL3 \\ \hline
   {Nominal Cumulative} Rate (Mbps) & 1.45 & 2.45 & 4.15 & 6.36 \\ \hline
  \end{tabular}
  \label{tab : svc_rates}
  \vspace{-.1in}
\end{table}

\begin{figure}[htbp]
	\centering
	\includegraphics[trim=0in 0in 0in 0in, clip,  width=.48\textwidth]{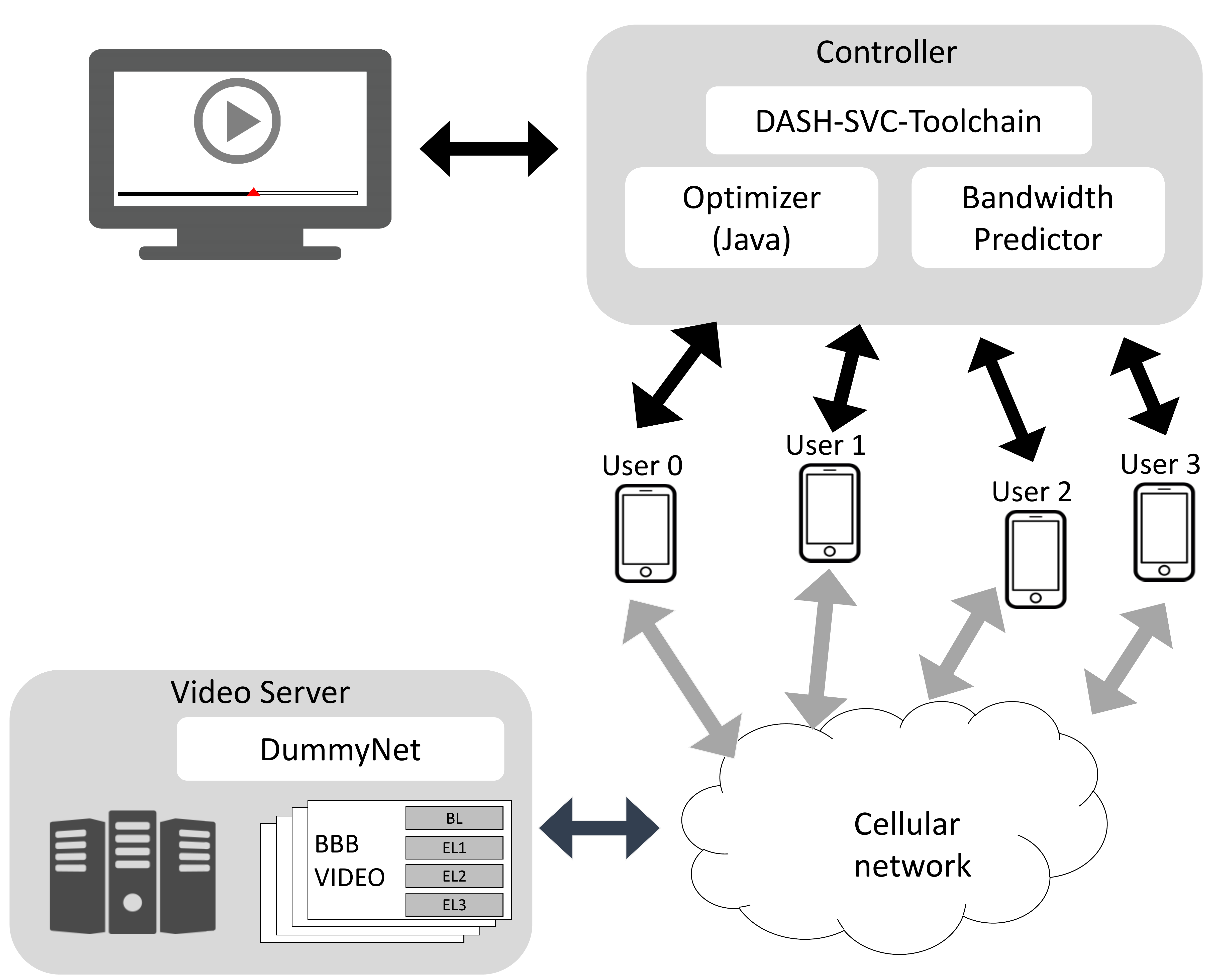}	
	\vspace{-.1in}
	
	\caption{System Setup of the Emulated Experiment.}
	\label{fig:syssetup}
	\vspace{-.1in}
\end{figure}

\begin{figure}[htbp]
	\centering
	\includegraphics[trim=0in 0in 0in 0in, clip,  width=.4\textwidth]{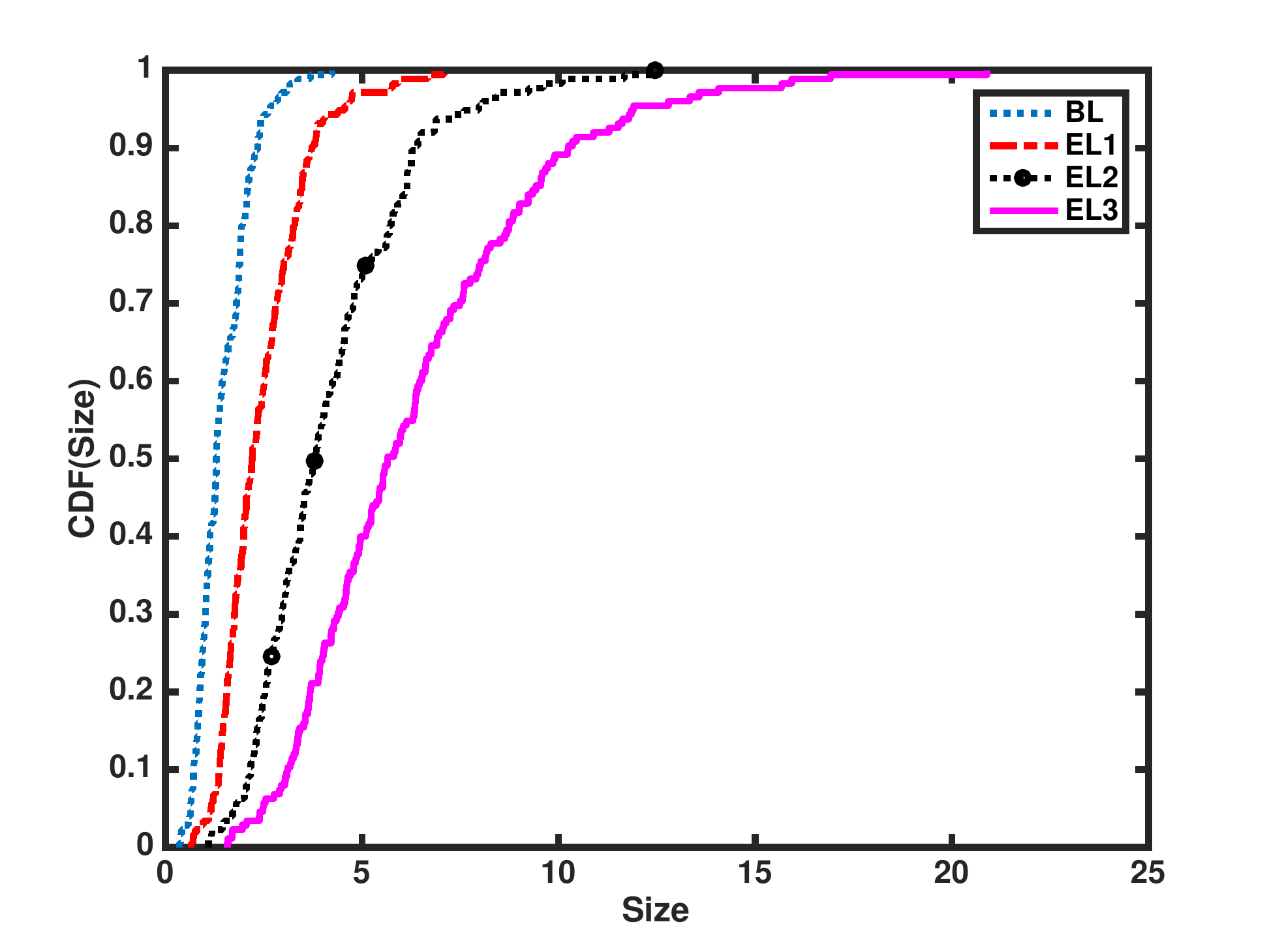}	
	\vspace{-.1in}
	
	\caption{Rate distribution of  the video used in the Implementation at different quality levels.}
	\label{fig:rate_dist}
	\vspace{-.1in}
\end{figure}

{\bf Implementation Setup.} Fig. \ref{fig:syssetup} illustrates the system setup. The implemented system contains three components: (i) the video server, (ii) the controller, and (iii) the clients. The functionalities of the video server include hosting the videos and running the bandwidth shaper. The video server runs on a Dell Power Edge R420 with a 6-core Intel E5-2620v3 CPU, 120 GB of RAM, and Ubuntu Server 16.0 LTS OS. The video server runs a Python-based HTTP Server that serves the requests of the clients. Each layer in the video is requested by HTTP GET requests using the layer's URL. We use the publicly available video ``Big Buck Bunny"~\cite{SVCDataset}. The video consists of 299 chunks, {\em i.e.}, 14315 frames, where each chunk has a duration of 2 seconds (48 frames). We use the quality of SVC scalability mode. The frame rate of this video is 24fps, and the spatial resolution is 1280x720. The video is encoded into one base layer and three enhancement layers. Table~\ref{tab : svc_rates} gives the cumulative nominal rates of each layer, where "BL" and "EL$_i$" refer to the base layer and the cumulative $i^{th}$ enhancement layer rate, respectively. For example, the exact rate of the $i^{th}$ enhancement layer is equal to EL$_i$-EL$_{(i-1)}$, with EL$_0 = $BL. The rate distribution of the different layers of the VBR video is depicted in Fig. \ref{fig:rate_dist}. For the bandwidth shaping, we use \textit{Dummynet}~\cite{dummynet} tool to limit the outgoing bandwidth from the video server according the collected bandwidth traces. All reported results are based on the 1000 bandwidth traces described next.

The controller contains three components: (i) the Optimizer, (ii) the Bandwidth Predictor, and (iii) the SVC decoder. The functionalities of the controller include providing and orchestrating the communication among users, running the optimizer, sending the decisions to the clients, receiving back the downloaded layers from clients, decoding the received chunks, and playing the video on the display. The optimizer and the bandwidth predictor are implemented in Java. We use the DASH-SVC-Toolchain to decode the video and stream the videos to the display screen~\cite{SVCDataset}. The controller can be hosted on any device, \eg a phone, a laptop, or a single-board computer, that can provide tethered or wireless connections to the clients and the display. In our implementation, we use a MacBook Pro with 2.4 GHz Intel I5 processor and 8 GB DDR3 RAM as the controller, which communicates with the clients using an 802.11n WiFi link. Such link does not incur communication overhead, \ie latency, since the clients are within the proximity of the controller. The 802.11n network can operate with a net data rate of at least 100 Mbit/s \cite{xiao2005ieee}, which is sufficient to serve the users in our setup without being a bottleneck. 

The functionalities of the Clients include receiving the fetching policy from the controller, sending the HTTP GET request for each assigned layer to the Video server, and streaming the downloaded layers to the controller. Note that the clients do not decode the downloaded SVC chunks. The client application is implemented in Java and deployed as a mobile phone application. We use four phones as clients, each running Android 7.0. Each phone has 4G of RAM and 32GB of internal memory. We assume that the only connection to the Internet for the controller is provided by a cellular network (LTE/4G/3G) via the clients.  

{\bf Bandwidth Traces.} We use a public dataset that consists of continuous one-second measurement of throughput for cellular system \cite{riiser2013commute}. The dataset has been divided into 1000 traces, each of six minutes length \cite{MPC}. The statistics of the dataset are shown in Fig.~\ref{fig : bwStatV2}(a-b). We assign $250$ bandwidth traces to each user at random. The average throughput across the traces varies from 0.7 Mbps to 2.7 Mbps, with a median of 1.6 Mbps. In each trace, the instantaneous throughput is also highly variable, with the average standard deviation across traces being 0.9 Mbps. We only consider the first 175 chunks of the video because that the video length is longer than the provided bandwidth traces.

\begin{table}[htb]
	  \vspace{-.1in}
  \centering
  \caption{Max contributions used in our evaluation}
  \begin{tabular}{|c|cccc|} \hline
    User No.& 1 & 2 & 3 & 4 \\ \hline
   {Max Contribution} (Mb) & 672 & 504 & 336 & 168 \\ \hline
  \end{tabular}
  \label{tab : max_cont}
  \vspace{-.1in}
\end{table}

{\bf Experiment Parameters.}  We considered three experiments scenarios. The first scenario is the \textit{no-preference and infinite contribution}, \ie all users contribute equally likely while imposing no maximum contribution to any of the users. Second scenario is the \textit{no-preference and finite contribution}, which imposes maximum contribution for all the users. Table \ref{tab : max_cont} gives the maximum contribution of the four users in $Mb$ that are used in the evaluation. We note that the sum of the maximum contributions is 1400 Mb, which allows to fetch most of the chunks at the second enhancement layer given that the bandwidth is enough. Notice that the expected size of all the chunks at the second enhancement layer is $4.8\times 2\times 175 = 1680$ $Mb$. The third scenario is the \textit{preference-aware and finite contribution} in which users impose the maximum contributions specified in Table \ref{tab : max_cont}. Moreover, users 3 and 4 are less preferable in the sense they can only help in avoiding skips, \ie  fetch base layers if the other two users cannot meet the deadlines.

We assume a playback buffer of 10 seconds for all the scenarios considered in the evaluation. In other words, the window size is $W=5$ chunks. Moreover, for all online algorithms, we re-consider the decisions 4 four seconds, \ie $\alpha=4$ $s$, and the parameter $\delta$ is also assumed to be 2 seconds. Finally, the startup delay is 5 seconds.


\begin{figure}
\input{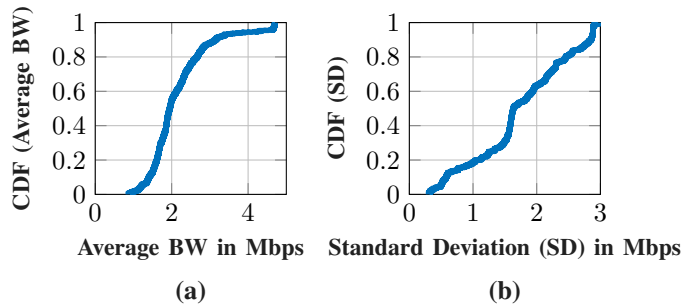}
 \vspace{-.3in}
 \caption{Statistics of the two bandwidth traces: (a) mean, and (b) standard deviation of each trace's available bandwidth.}
 \label{fig : bwStatV2}
 \vspace{-.1in}
 \end{figure}

{\bf Bandwidth Prediction. } Although our algorithm can work with any bandwidth prediction technique, we consider a harmonic mean-based prediction. The harmonic mean of the throughput achieved in fetching the last 5 layers by the $i^{th}$ user is used as a predictor for the bandwidth of that user for the next window of chunks. Since there is no prediction at the beginning, the first 4 chunks will be assigned randomly to the 4 users at base layer quality. Consequently, each user will be fetching one of the first 4 chunks. 


{\bf Comparison Baselines. } The proposed online algorithm is compared with the following two strategies based on round robin assignment of layers to users. To achieve fair comparison with proposed algorithms, we consider maximum contribution per window in the round robin assignment. In particular, a layer is skipped by the user if his residual contribution limit in the window is not enough to fetch that layer. We also include the offline algorithm to show the relative performance of the compared algorithms to our offline scenario, which has perfect knowledge of the bandwidth for the whole period of the video.


{\bf Buffer-based Cooperative streaming Approach (BB).} BB adjusts the streaming quality based on the playback buffer occupancy. Specifically, the quality follow the same strategy of BBA algorithm \cite{BBA} in making the quality decision. The decision depends on two conditions. First, if the buffer occupancy is lower (higher) than the lower (higher) threshold, then chunks are fetched at the lowest (highest) quality. Second, if the buffer occupancy lies in between the two thresholds, then the buffer rate relationship is determined by a linear function between the two thresholds. We use $4$ s and $10$ s as the lower and higher thresholds on the buffer length, respectively. 

Once the decision is obtained, the different layers of the next window of chunks are assigned to users 1, 2, 3, and 4 using round robin strategy for the no-preference scenario. If the maximum contribution of the current window for any of the users does not allow to fetch the base layers, then the next users in the round is chosen. The preference scenario works the same except that if the decided quality is higher than the base layer quality and the maximum contribution of the preferred links allow to fetch base layers, then the enhancement layers are not assigned to the less preferred users. As a result, the less preferred users do not fetch beyond the base layer. If fetching any layer causes to violate the maximum contribution of this window for all users, then that layer is skipped. The process is repeated every $\alpha$ seconds to decide the fetching policy of the next $W$ chunks similar to GroupCast algorithm.

{\bf Prediction based Cooperative streaming Approach (PB).} 
PB uses the harmonic mean to predict the bandwidth for the next window of chunks. In particular, PB computes the sum of the predicted bandwidths for all users for the no-preference scenario, and for users 1 and 2 for the preference scenario. Then, it considers $90\%$ of the computed value as the predicted bandwidth. The closest quality level that is less than the value of the predicted bandwidth is used as the quality decision of all chunks of this window. The decided layers to be fetched are distributed in round robin strategy as described for the BB algorithm.


\begin{table}[htb]
	  \vspace{-.1in}
  \centering
  \caption{Skip Percentage and Average Playback rate comparison}
  \begin{tabular}{|c|} \hline
  No Preference, Infinite Contribution\\ \hline
   \end{tabular}
  \begin{tabular}{|c|cccc|} \hline
    Algorithm& BB & PB & GroupCast & off-GroupCast \\ \hline   
    \% of Skips&4.89 &3.5 & 0.28&0\\ \hline
    APBR(Mbps)&5.77&4.0&6.18&6.35\\ \hline
     \end{tabular}
 \begin{tabular}{|c|} \hline
  No Preference, Finite Contribution\\ \hline
   \end{tabular}
    \begin{tabular}{|c|cccc|} \hline
    Algorithm& BB & PB & GroupCast & off-GroupCast \\ \hline   
    \% of Skips&6.21 &4.89 & 2.08&0\\ \hline
    APBR(Mbps)&4.20&3.90&4.46&4.80\\ \hline
  \end{tabular}
   \begin{tabular}{|c|} \hline
  Preference-aware, Finite Contribution\\ \hline
   \end{tabular}
    \begin{tabular}{|c|cccc|} \hline
    Algorithm& BB & PB & GroupCast & off-GroupCast \\ \hline   
    \% of Skips&16.75 &11.68 & 1.80&0\\ \hline
    APBR(Mbps)&2.75&2.00&3.37&3.35\\ \hline
  \end{tabular}

  \label{tab: compTab}
  \vspace{-.1in}
\end{table}

\begin{figure*}
\centering
\includegraphics[trim=0in 0in 0in 0in, clip,  width=\textwidth]{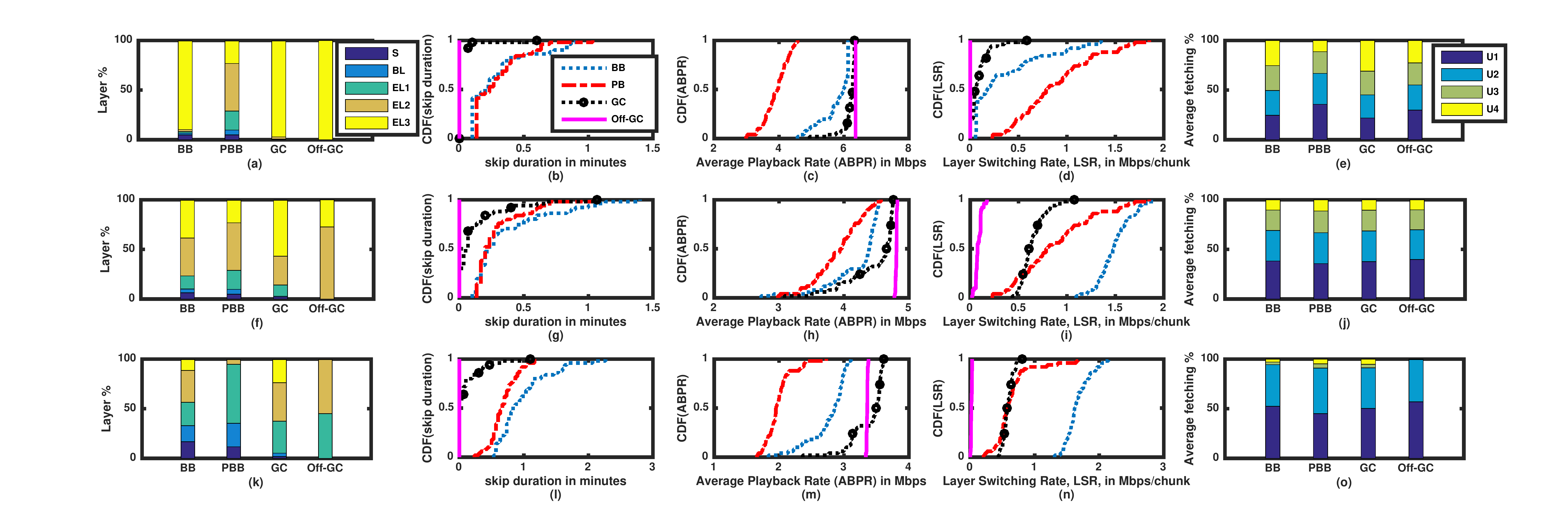}		
 \vspace{-.3in}
 \caption{ Comparing streaming algorithms: (a,b,c,d,e) are Layer percentage, CDF of the skip duration, CDF of the Average Playback Rate, APBR, CDF of the Layer Switching Rate, and  fetching percentage per user for the no-preference infinite contribution scenario, (f,g,h,i,j) and (k,l,m,n,o) show the same thing for no-preference finite contribution and preference-aware scenario, respectively.}
 \label{fig : coopFig}
\end{figure*}

\if0
\begin{figure}
\centering
\includegraphics[trim=0in 0in .1in 0in, clip,  width=0.48\textwidth]{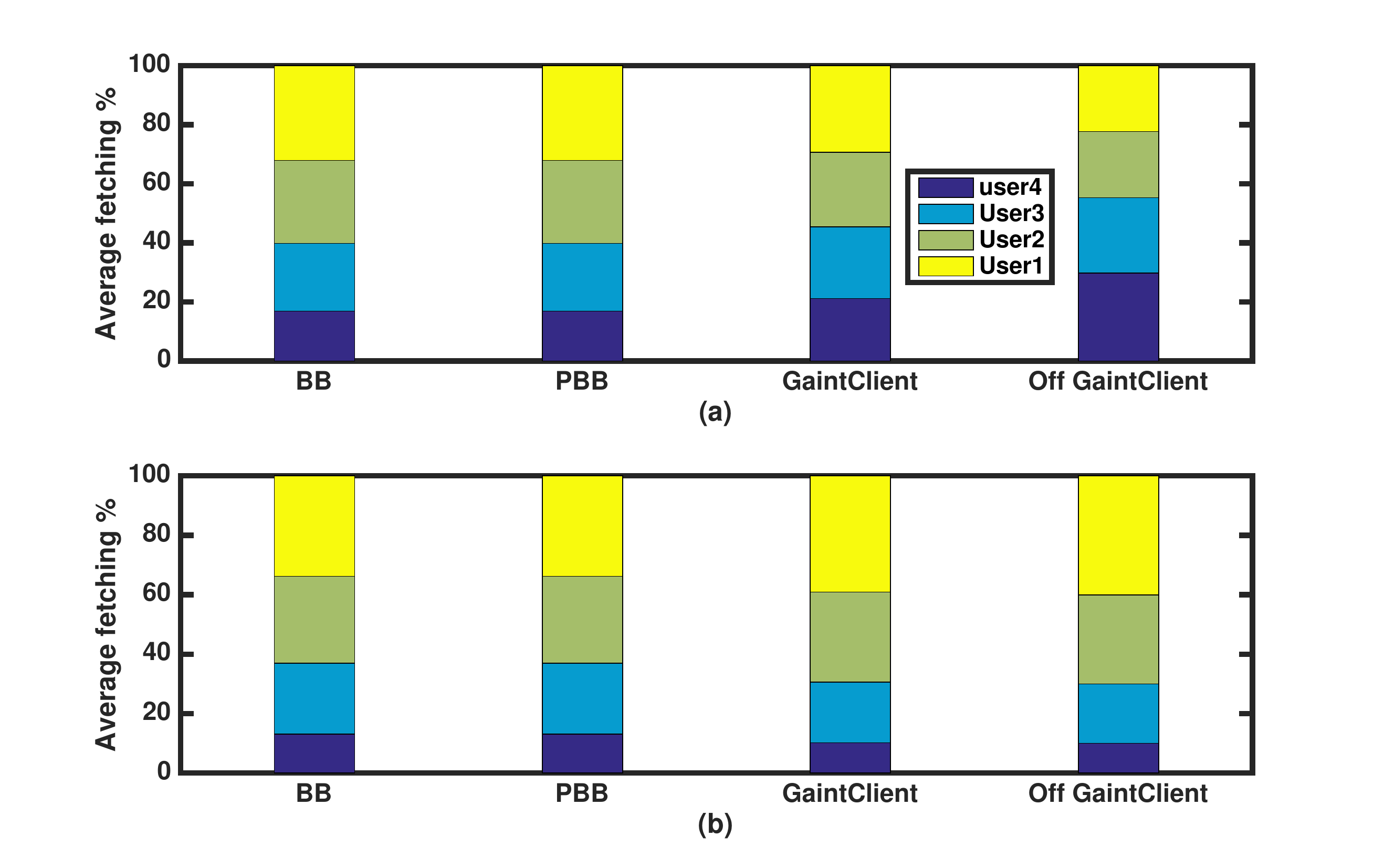}	
 \vspace{-.3in}
 \caption{Percentage of content fetched from different users, (a) infinite contribution, and (b) finite contribution scenarios}
 \label{fig:coopFig2}
\end{figure}
\fi

\subsection{Evaluation Results}
\label{sec:eval_skip}

In this subsection, we compare the performance of the GroupCast algorithm 
with the baseline algorithms described in Section~\ref{sec:evalpar}. We denote the offline fetching policy by \textit{Off-GroupCast}, where the bandwidth is predicted perfectly and the decisions of all chunks are calculated in one run of the optimization problem. The maximum contributions are constrained as given in Table \ref{tab : max_cont}.

\begin{thisnote}
The results are illustrated in Table~\ref{tab: compTab} and Fig.~\ref{fig : coopFig}. From the results, Off-GroupCast can fetch all chunks all chunks at least at the base layer quality without running into skips in all three scenarios. Moreover, we observe that GroupCast achieves the minimum number of skips and the highest playback average among all online algorithms. GroupCast significantly outperforms BB and PB. For example, over all the traces, GroupCast runs in only $0.28\%, 2.08\%$, and $1.80\%$ of skips in the no-preference infinite contribution, no-preference finite contribution, and preference-aware finite contribution, respectively. On the other hand, BB runs into $4.89\%, 6.21\%$, and $16.75\%$ skips, while PB runs into $3.1\%, 4.89\%$, and $11.68\%$ skips for same respective scenarios.

Fig.~\ref{fig : coopFig}-(a, f, k) show the layer percentage for each scenario. Fig. \ref{fig : coopFig}-a illustrates that there is an assignment policy such that all chunks can be fetched at the highest quality. Moreover, we see that even though GroupCast uses short prediction window with prediction errors, it still achieves the closest performance to the offline algorithm which has the perfect knowledge of the future bandwidth. This is because that we use the sliding based window in which decisions are reconsidered every $\alpha$ seconds.

Fig.~\ref{fig : coopFig}-(f) demonstrates that the quality drops when the maximum contribution constraint is imposed. According to the maximum contribution values imposed in Table~\ref{tab: compTab}, we expect that most of the chunks will be fetched at the $2^{nd}$ enhancement layer quality, which is also reflected by the off-GroupCast algorithm in Fig.~\ref{fig : coopFig}-f. We also note that GroupCast is the best algorithm to adjust to the imposed maximum contribution constraints among online algorithms. However, since the online algorithm  only has local information per window every $\alpha$ seconds, we observe that it runs into fetching more chunks at the highest quality at the cost of running into subsequent skips.

Fig.~\ref{fig : coopFig}-(k) shows the results of preference-aware with finite contribution in which users 3 and 4 can only help in avoiding skips. The figure shows that Off-GroupCast can ideally fetch all chunks at either E1 or E2 quality levels. We note similar observations in the no-preference finite contribution scenario, \ie GroupCast significantly outperform the round robin strategies in avoiding skips and fetch more chunks at higher quality. Moreover, we observe that because of the short prediction of the online GroupCast, it fetches more chunks at the highest quality levels at the cost of skipping some other chunks. Therefore, GroupCat achieves higher average than Off-GroupCast at the cost of more skips, which further degrades the QoE according to our formulation.


Fig.~\ref{fig : coopFig}-(b, g, f) and Fig.~\ref{fig : coopFig}-(c, h, m) show the CDF of the skip percentage and the average playback rates of all of the algorithms over all of the bandwidth traces, respectively. We clearly see that GroupCast achieves the minimum number of skips and the highest average playback rate almost in every single bandwidth trace which reflects the adaptability of GroupCast to different bandwidth regimes and oscillations.

Fig.~\ref{fig : coopFig}-(d) plots the distribution of the layer switching rate (LSR) for the all bandwidth traces. The LSR of a video is defined as $\frac{1}{C}\sum_{i=2}^C |X(i)-X(i-1)|\cdot {\bf 1}(\Gamma(i) \neq \Gamma(i-1))$, where $C$ is the number of chunks and $\Gamma(i)$ is the highest layer fetched for the $i^{th}$ chunk. Thus, if every two neighboring chunks are fetched at the same layer, then the layer switching is zero even if the layers have different sizes. In other words, we only account for size difference in jumping from one layer to another. As illustrated by the figure, GroupCast achieves a significantly lower LSR as compared to BB and PB. We note that the LSR is lower than 1 Mbps with probability 1 for GroupCast and Off-GroupCast. However, GroupCast has higher LSR than Off-GroupCast because GroupCast has erroneous bandwidth prediction and only makes local window-based decision.

Fig.~\ref{fig : coopFig}-(e) shows the percentage of the content fetched by each user. According to the settings of preference-aware finite contribution scenario, users 3 and 4 should only be used to avoid skips. Off-GroupCast shows that users 3 and 4 are used to fetch a negligible number of base layers (almost 0). However, we observe that GroupCast uses users 3 and 4 more than required. The reason is due to the local window-based decision and the bandwidth prediction error of GroupCast. As observed in the figure, GroupCast goes up to fetching chunks at E3 quality level and, as a result, affects the quality of subsequent chunks that cannot meet their deadlines by using only the first two users. Nevertheless, GroupCast uses users 3 and 4 more efficiently than BB and PB to avoid further skips because that GroupCast incorporates both users 3 and 4 with their imposed constraints in the optimization-based decision.

\end{thisnote}
In conclusion, GroupCast algorithm is able to distribute the layers among users efficiently, thus achieving significantly higher QoE as compared to the considered baselines. We observe that incorporating the chunk deadlines into the optimization problem and favoring the subsequent chunks is essential for the success of the algorithm. This is a unique feature in GroupCast that allows to achieve both low skip duration and high playback quality, while favoring lower layers over the higher layers.






\section{Conclusions}\label{sec:concl}

In this paper, we propose a preference-aware cooperative video streaming algorithm for SVC-encoded videos. We consider both skip and no-skip based streaming. Finding the quality decisions of the video's chunks and the fetching policy of the SVC layers subject to the available bandwidth, chunk deadlines, and cooperation willingness of the different users was formulated as non-convex optimization problem. A novel algorithm was developed to solve the proposed optimization problem. The proposed algorithm has a polynomial run-time complexity. Real implementation on android devices using SVC-encoded video and bandwidth traces from a public dataset reveal the robustness and performance of the proposed algorithm. The results show that the proposed algorithm improves the number of skips/stalls by at least 57\% as compared to the considered baselines while also improving the average quality.

The paper considers streaming using multiple clients, when the video is encoded with a layered code. Extending the algorithm and implementation over MPEG-DASH with byte-range requests is an interesting future direction. 




\bibliographystyle{IEEEtran} %

\bibliography{refs,bib,multipath_feng}
\newpage
\clearpage

\appendices
\section{Illustration Examples for Pref/No-Pref GroupCast} \label{examples}

\subsection{Example 1: No-Pref GroupCast}
\label{ex1}
\begin{figure*}[htbp]
\centering
\includegraphics[trim=0.6in 1.5in 0.6in 1.5in, clip,  width=\textwidth]{./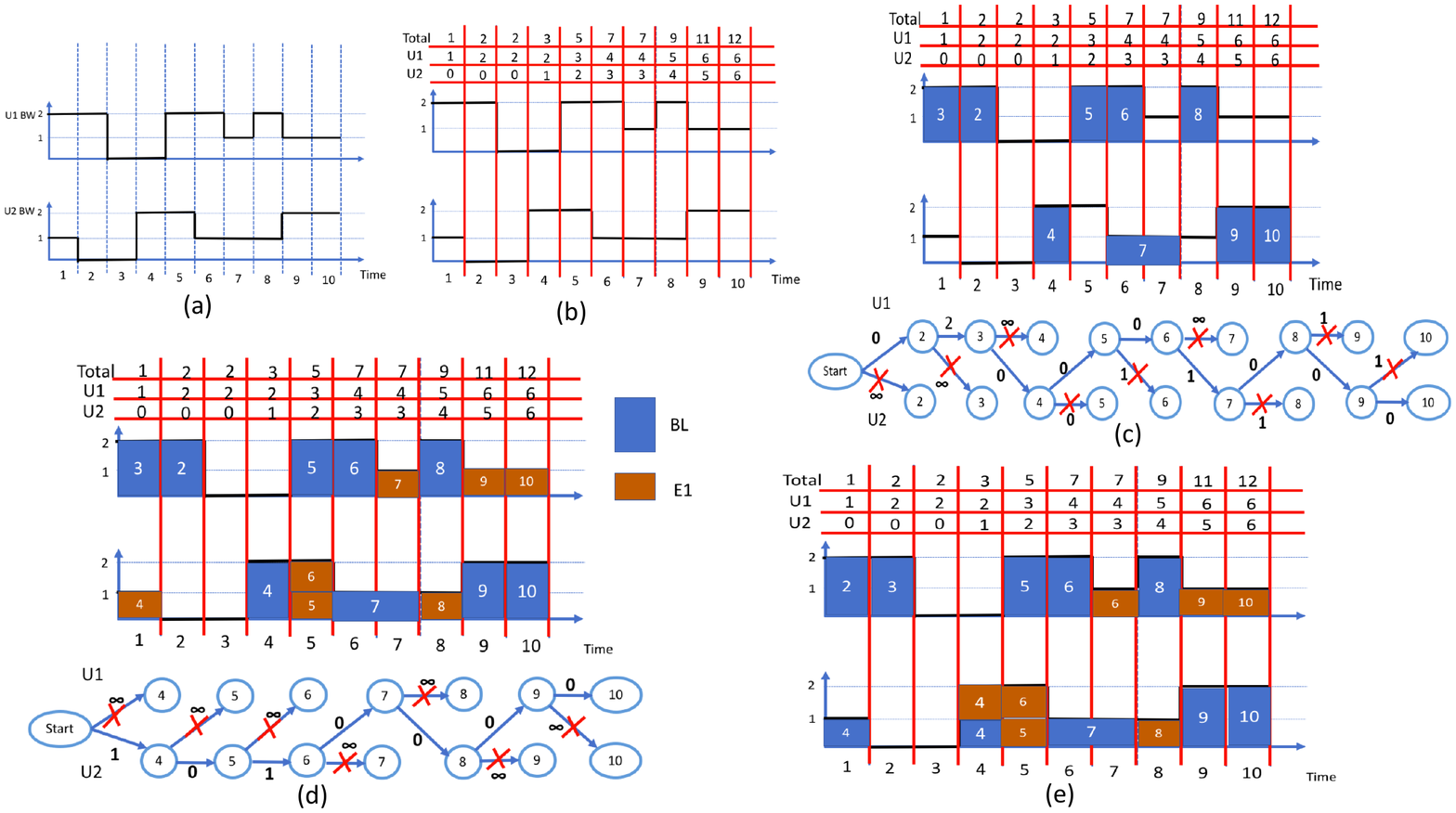}	
 \vspace{-.3in}
 \caption{ Illustration Example of No-Pref GroupCast Algorithm: (a) Bandwidth Trace, (b) Base Layers that can be fetched before the deadline of every chunk, (c) Base Layer Decision, (d) $1^{st}$ Enhancement Layer Decision, and (e) The Actual Fetching Policy}
 \label{fig:ex1}
\end{figure*}

 Fig.~\ref{fig:ex1} shows an example that illustrates how No-Pref GroupCast algorithm works. We assume a video that consists of 10 chunks, 1 second length each. The video is an SVC encoded into 1 Base Layer ($BL$) and 1 Enhancement Layer ($E_1$). The $BL$ and $E_1$ sizes are $2Mb$, and $1Mb$ respectively. \ie  $Y_0=2Mb$, and the $Y_1=1Mb$. Moreover, we assume that the startup delay is 1 second. Therefore, the $deadline(i)=i, \forall i$. 

We assume a scenario of two users U1 and U2. Fig.~\ref{fig:ex1}-a shows the bandwidth traces of the two users, and Fig.~\ref{fig:ex1}-b show the result of the first forward scan. The forward algorithm finds the maximum number of base layers that can be fetched before the deadline of every chunk. We clearly see that up to the deadline of the $3$rd chunk, only 2 chunks can be fetched. Therefore, one out of the first 3 chunks should be skipped. The algorithm as explained previously decides to skip the first chunk (chunk 1) since the bandwidth that the first chunk leaves can be available to all of the remaining chunks for the next layer decisions. In other words, if we skip chunk 3, then it is possible that part or the whole bandwidth that it leaves comes after the deadline of chunks 1 and 2 which means that chunks 1 and 2 can't benefit from this bandwidth in fetching their higher layers.

Forward scan finds the chunks that can have their base layers fetched without violating their deadlines. Consequently, Backward scan described in Fig.~\ref{fig:ex1}-c finds the fetching policy such that all these chunks have their base layers fetched without violating their deadlines as promised by forward algorithm. Moreover, backward algorithm finds the policy that maximizes the total bandwidth of every chunk for the next layer decision. 

 As shown in Fig.~\ref{fig:ex1}-c, the algorithm simulates fetching the base layer of every chunk starting from its deadline going backwards. Moreover, it does not consider chunk $1$ since forward algorithm decided that chunk 1 can't be fetched. The algorithm proceeds in the order of the chunks, finds $\zeta_{i,0}(u), \forall u$, and decides which link should fetch each chunk. The bottom subplot of Fig.~\ref{fig:ex1}-c shows how the assignment decision of chunks is made. First step is to calculate the cost of fetching chunk 2 by each of the users (U1 and U2). The cost of fetching chunk $2$ by the first user is $0$ ($\zeta_{2,0}(1)=0$) since fetching the chunk starting from its deadline by user 1 will lead to fully downloading it before crossing the deadline of chunk 1. However, chunk 2 can't be fetched by the second user, so $\zeta_{2,0}(2)=\infty$. The costs are marked on the arrows that points from the decision of a chunk to the next one in the bottom subplot of Fig.~\ref{fig:ex1}-c. Following the same strategy, we see that chunks 2, 3, 5, 6, and 8 are assigned to the first user, and chunks 4, 7, 9, and 10 are assigned to the second user. One more thing worths mentioning, we see in the top subplot of Fig.~\ref{fig:ex1}-c that chunk 3 is fetched before 2. To explain this, there are two things to point out here. First, this decision is not the final fetching policy, it just provides the chunk assignment policy such that the number of the current layer skips is minimized and the total bandwidth available before the deadline of every chunk for the next layer decision is maximized . Second, since we simulate fetching chunks backward starting from their deadlines and in the order of the chunks, and chunk number 2 was fetched first. Hence, the only available bandwidth to fetch chunk 3 is at the first time slot which led to considering fetching chunk 3 before chunk 2. We will see in the final decision, Fig.~\ref{fig:ex1}-e which represents the actual fetching policy, the chunks 2 and 3 will be fetched in their order, so the current decision of chunks 2 and 3 will be reversed.
 
 
 Fig.~\ref{fig:ex1}-d shows how the algorithm repeats the same process described in Fig.~\ref{fig:ex1}-c but for the $1^{st}$ enhancement layer decisions. In the $E1$ decisions, the algorithm uses the remaining bandwidth of each link after excluding the amount reserved for fetching the $BL$s. Moreover, since the $BL$ of the first chunk is not fetched, its $E1$ is not considered. Finally, Fig.~\ref{fig:ex1}-e shows the actual fetching of the chunks. As shown in the figure, each link fetches the layers in order of the chunks they belong to, and for the same chunk, the layers are fetched according to their order. For example,  $E1$ of the $4$-th chunk is fetched before $E1$ of the $6$-th chunk.

\subsection{Example 2: Pref GroupCast}
\label{ex2}

\begin{figure*}[htbp]
\centering
\includegraphics[trim=0.6in 1.5in 0.4in 1.5in, clip,  width=\textwidth]{./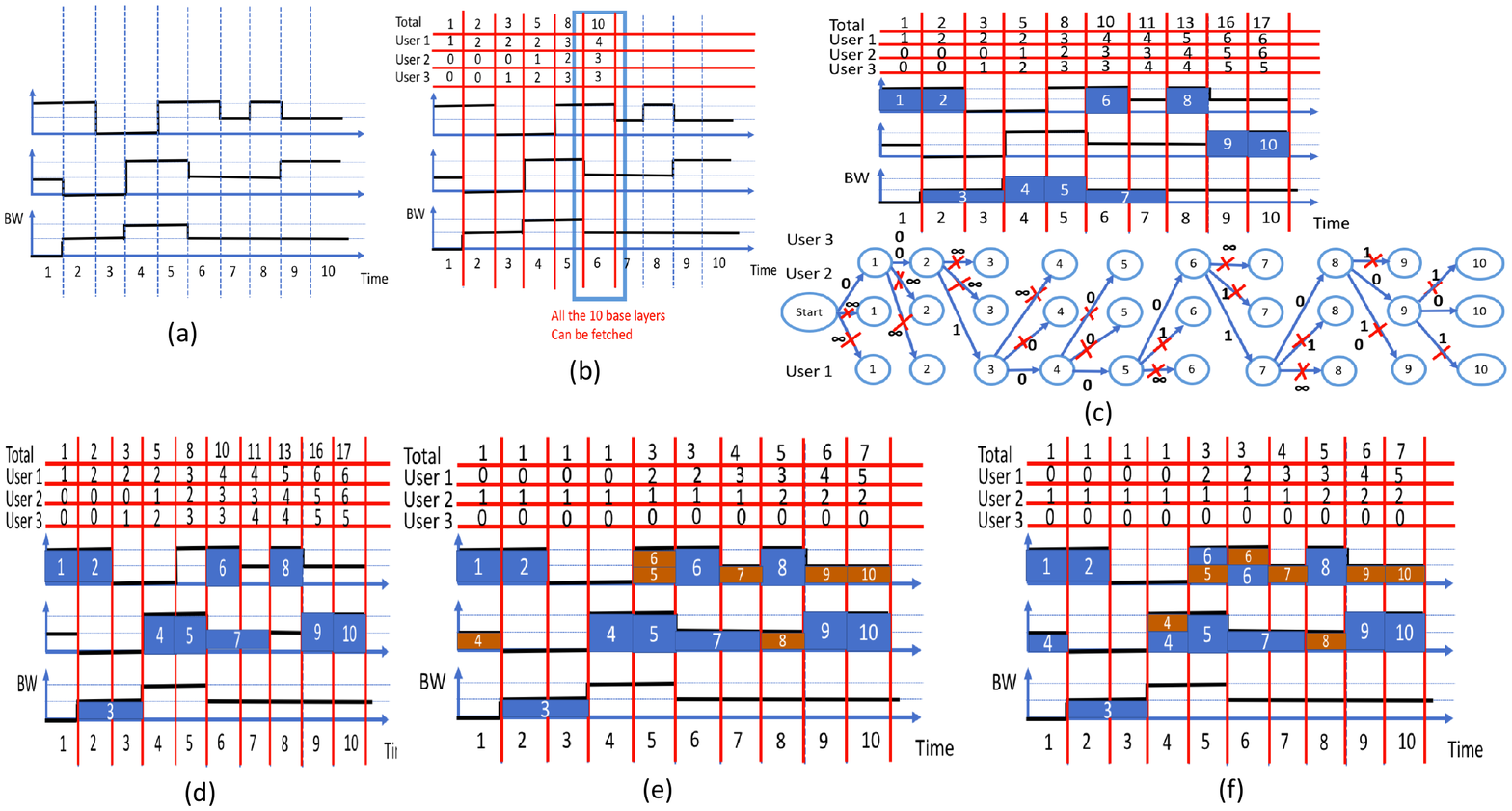}	
 \vspace{-.3in}
 \caption{ Illustration Example of Pref  GroupCast Algorithm: (a) The bandwidth traces, (b) The Skip decision, (c) After the first run of No-Pref GroupCast, (d) After the second run of No-Pref GroupCast, (e) Enhancement layer decisions, (f) The Final Fetching Policy}
\label{fig:ex2}
\end{figure*}

This example illustrates how Pref GroupCast works. We assume that user 3 is used only to avoid skips, and user 2 can help fetching up to the first enhancement layer if user 1 fails to do that. Bandwidth traces are shown in  Fig.~\ref{fig:ex2}-a, and the video parameters are as those described in example 1. Fig.~\ref{fig:ex2}-c shows the initial fetching policy after the first call of No-Pref GroupCast. According to the initial call of No-Pref GroupCast, all of the chunks can be fetched at least at base layer quality.

Fig.~\ref{fig:ex2}-d shows the outcome of the second call of No-Pref GroupCast. In this call only links 1 and 2 and chunks that were initially decided to be fetched by link 3 ($3,4,5$, and $7$) are considered. As a result to the second call of No-Pref GroupCast, chunks 4, 5, and 7 are moved to link 2. Running exchange algorithm will not change any decisions made already since chunk 3 is the earliest chunk that can be fetched by the 3rd user. Fig.~\ref{fig:ex2}-e shows the $E1$ decisions. Note that user 3 doesn't fetch enhancement layers. Finally, Fig.~\ref{fig:ex2}-f shows the actual sequence of chunk downloads.


\section{Proof of Theorem \ref{them:skip1}}
\label{apdx_skip1}
The forward scan for base layers decides to skip a base layer of a chunk $i$ only if the total bandwidth up to a deadline of a chunk $j \geq i$ is not enough to fetch all chunks $1$ to $j$. Since, the bandwidth up to the deadline of the $j^{th}$ chunk is not enough to fetch all chunks $1$ to $j$. Any other feasible algorithm, \ie algorithm that does not violate the bandwidth constraint, must have a skip for a chunk $i^\prime \leq j$. Thus, for every base layer skip of the proposed algorithm, there must be a base layer skip or more for any other feasible algorithm. The forward scan repeats for every enhancement layer in order with the remaining bandwidth. For every layer $n$, the proposed algorithm decides to skip the $n^{th}$ layer of a chunk $i$ in two cases:
\begin{itemize}
\item If the $(n-1)^{th}$ layer of this chunk is not decided to be fetched. Thus, any feasible algorithm has to skip the $n^{th}$ layer of a chunk if the $n^{th}$ layer is not fetched; otherwise constraint~(\ref{equ:c3eq1}) will be violated.

\item The remaining bandwidth after excluding whatever reserved to fetch layers $1$ to $n-1$ for all chunks is not enough to fetch the $n^{th}$ layer of a chunk $j \geq i$. Thus, any feasible algorithm will decide to skip an $n^{th}$ layer of a chunk $i^\prime < j$. Otherwise the bandwidth constraint will be violated
\end{itemize}

Therefore, for any $n^{th}$ layer skip of the proposed algorithm, there must be an $n^{th}$ layer skip or more for any feasible algorithm. Thus, the algorithm achieves the minimum number of skips and that concludes the proof.

\if0
\subsection{Proof of Lemma \ref{lem:skip2}}
\label{apdx_skip2}
We note that the proposed algorithm brings all $n$th layer skips to the very beginning (if necessary, it skips the earliest ones). We note from Section \ref{proofskipl1} that if the ordered set of
$n$th layer skips for the backward algorithm are $i_1, i_2, \cdots, i_H$ and for any feasible algorithm with same number of $n$th layer skips  are $j_1, j_2, \cdots, j_H$, then $i_k \le j_k$ for  all $k = 1, \cdots, H$.  Earlier $n$th layer skips help get the higher bandwidth available for future chunks thus proving the result in the statement of the theorem. Any other feasible algorithm with larger number of $n$th layer skips will achieve smaller objective when $\gamma$ satisfies \eqref{basic_gamma_1}, thus showing that it will not be optimal.

\subsection{Proof of Theorem \ref{theorem: theorem11}}
\label{noprefthm}
The result follows by recursive use of Lemmas \ref{lem:skip1}  and \ref{lem:skip2}. Use of  Lemma \ref{lem:skip1} shows that the proposed algorithm is optimal for base layer skips.  According to lemma \ref{lem:skip2}, running  No-Pref-GroupCast algorithm offers the maximum bandwidth per chunk for next layer decisions among all feasible algorithms with same number of skips. Therefore, the bandwidth profile that is passed to $E_1$ scan is the maximum per chunk. Running No-Pref-GroupCast algorithm on $E_1$ layer would produce optimal $BL$ and $E_1$ decisions by Lemma \ref{lem:skip1}. Keep scanning sequentially up to $N$th layer would yield optimal solution to the optimization problem~(\ref{equ:eq1}-\ref{equ:c9eq1}) when all users are equally likely used, and $\lambda$'s satisfies \eqref{basic_gamma_0} and \eqref{basic_gamma_1}, and that concludes the proof.

\fi



%

\begin{thisnote}
	\section{Table of Symbols}
\label{app:symbols}

Table \ref{tbl:sym} describes the table of key notations that are used in this paper. 


\begin{center}
\begin{table}[h]
	{
  \caption{Table of Symbols}
  \begin{tabular}{ |c|c| }
    \hline
Symbol & Description \\ \hline
U & Number of users \\ \hline 
N & Number of layers \\ \hline 
C & Number of chunks \\ \hline 
L & Chunk's length \\ \hline
K & Number of users sets \\ \hline
BL & Base layer \\ \hline 
$s$ & Setup delay \\ \hline
$E_i$ & $i^{th}$ enhancement layer \\ \hline 
$Y_n$ & Size of the $n^{th}$ layer \\ \hline 
$r_0$ & Base layer rate \\ \hline
$r_i$ & $i^{th}$ enhancement layer rate \\ \hline 
$U_k$ & Users of the $k^{th}$ set \\ \hline 
$N_k$ & Maximum layer fetched by users of the $k^{th}$ set \\ \hline 
$\eta^k_u$ & Maximum download contribution of user $u$ in set $k$  \\ \hline 
$Z_{n,i}$ & Amount of $n^{th}$ layer in chunk $i$ that is downloaded  \\ \hline 
$z_{n,u}^{k}(i,j)$ & Amount of $n^{th}$ layer in chunk $i$  fetched in time slot $j$  \\ \hline 
${\bf L}_{k,u}$ & Link of user $u$ in set $k$  \\ \hline 
$D_{n,u}^{k}(i)$ & Total fetched amount of  $n^{th}$ layer of chunk $i$   \\ \hline 
$G_{n}^{k}(i)$ & Total fetched amount of  $n^{th}$ layer by users in set $k$   \\ \hline 
$B_{u}^{k}(j)$ & Available bandwidth of  user $u$ in set $k$  \\ \hline 
  \end{tabular}
\label{tbl:sym}}
\end{table}
\end{center}
\end{thisnote}
\begin{thisnote}
\section{Sensitivity Analysis }
\label{sec:sys_analysis}
In this section, we systematically study the impact of various parameters including the prediction window size, the algorithm update frequency,  the number of collaborating users, and the chunk duration. In all experiments, we use the same bandwidth traces. In each experiment, we vary a parameter while fixing all other parameters for four users. We set the maximum contribution of each user according to Table \ref{tab : max_cont}. 


\begin{figure*}[ht]
\includegraphics[trim=0in 0.1in 0in 0in, clip, width=\textwidth]{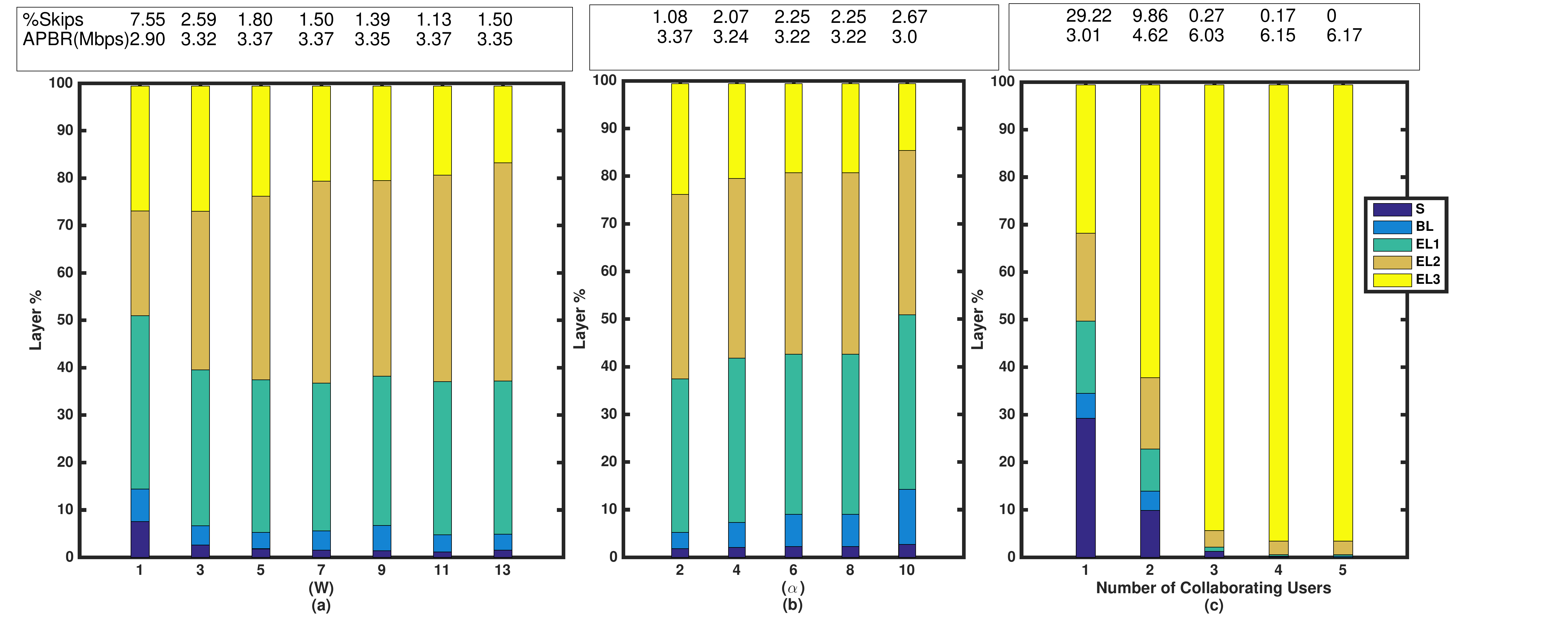}	
  \vspace{-.3in}
 \caption{Sensitivity Analysis of GroupCast, (a) Impact of the prediction window size $W$, (b) Impact of the update frequency $\alpha$, and (c) Impact of the number of contributing users}
  \label{fig : wPlot}
  \end{figure*}

{\bf Impact of prediction window size $W$.}
In the experiment, we vary the window size $W$ and plot the layers breakdown. The results are shown in Fig.~\ref{fig : wPlot}-(a). We choose $W=1$ to $13$ chunks with a step size of $2$. From the figure, we observe that our algorithm suffers from lower quality and high number of skips for small values of $W$, e.g., $W=1$. This is because that clients aggressively fetch earlier chunks at EL3 as the the algorithm does not leverage enough future bandwidth information. On the other hand, Fig.~\ref{fig : wPlot}-(a) shows that by increasing $W$ beyond $11$ chunks, we start to see an increase in the number of skips and reduction in the average playback rate since the predicted bandwidth will not be accurate for long time ahead. 
\if0
\begin{figure}
\includegraphics[trim=0in 0.1in 0in 0in, clip, width=0.48\textwidth]{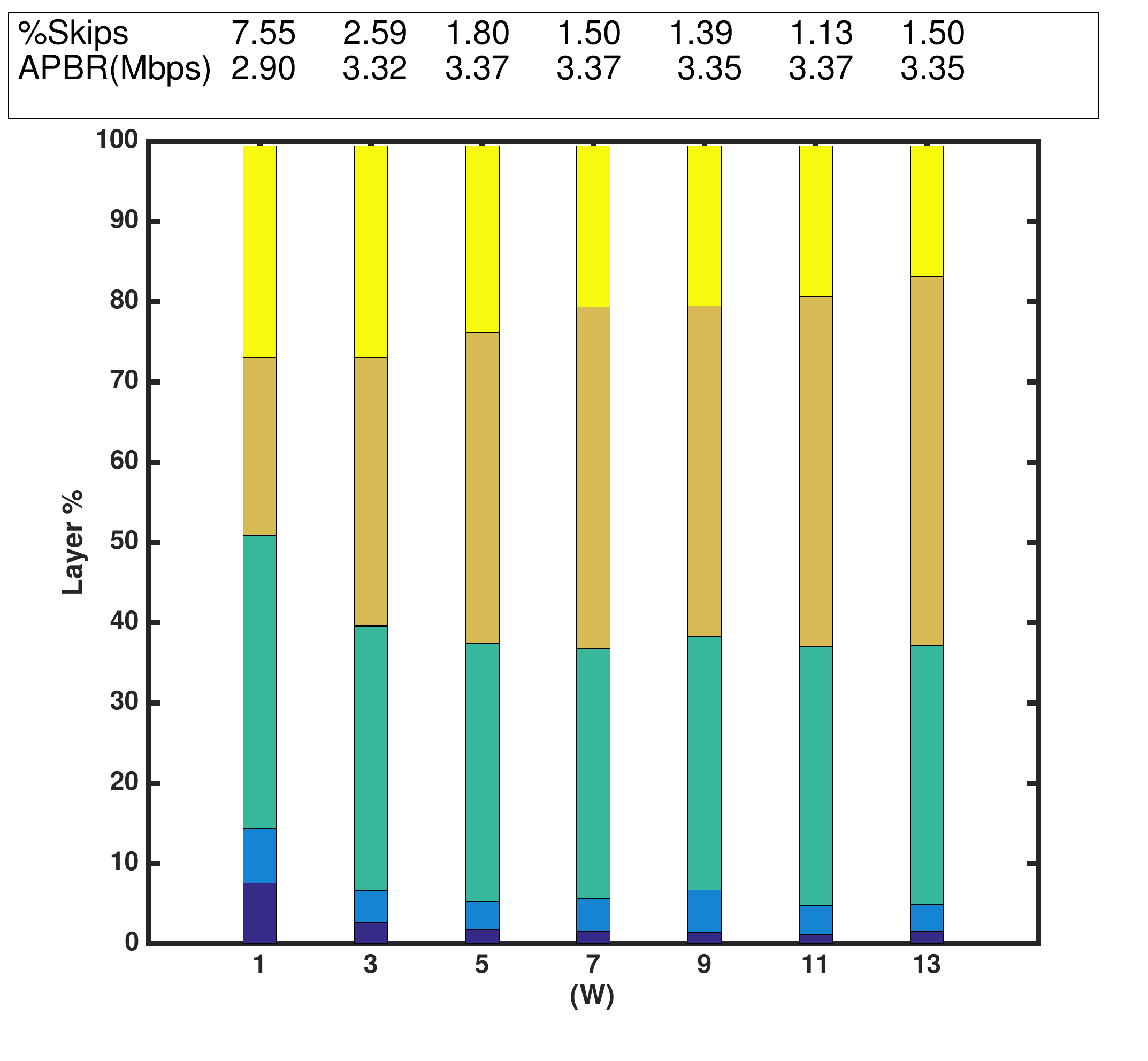}	
  \vspace{-.3in}
 \caption{Impact of $W$}
  \label{fig : wPlot}
  \end{figure}
\fi

{\bf Impact of the algorithm update frequency ($\alpha$).} Recall that in the online scheme, GroupCast computes the decision every $\alpha$ seconds in order to adapt to the bandwidth prediction, which is within a limited time window and/or subject to be constantly updated.
To study the impact of $\alpha$, we run the online skip based algorithm with $W=5$ and 
varying $\alpha$ from $2$ to $10$ seconds with a steps of $2$.
%
Fig.~\ref{fig : wPlot}-(b) shows the layers breakdown for different $\alpha$ values.
As illustrated by the figure, with smaller $\alpha$, our algorithm can rerun the optimization more frequently and hence be more adjustable to bandwidth prediction errors. As a result, our algorithm obtain better results, as indicated by less skips. Moreover, we are able to use a very small $\alpha$ to recompute the scheduling frequently with small runtime overhead because of the low complexity of the scheduling algorithm. 
\if0
\begin{figure}
\includegraphics[trim=0in 0.1in 0in 0in, clip, width=0.48\textwidth]{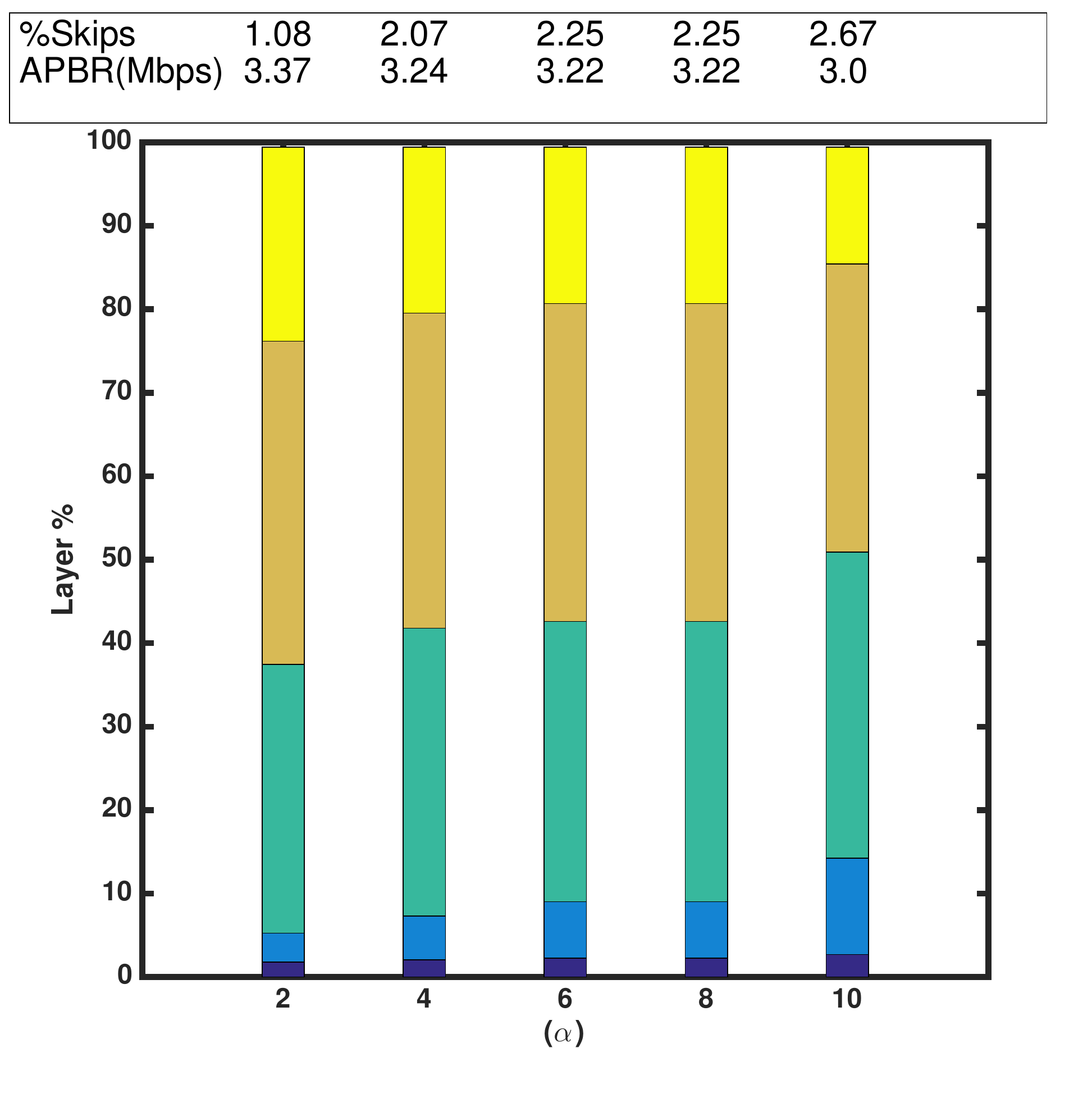}	
  \vspace{-.3in}
 \caption{Impact of $\alpha$}
  \label{fig : alphaPlot}
  \end{figure}
 \fi

  {\bf Impact of the number of collaborating users.} To study the impact of the number of collaborating users, we fix $W$ to be $5$ chunks and $\alpha$ to be $2$ seconds, and vary the number of collaborating users from $1$ to $5$ users. We assume no preference among users and no maximum contribution imposed constraints. We stopped at 5 users because we observed that that all chunks can be fetched at their highest quality levels with no skips with more than 5 users.
%
Fig.~\ref{fig : wPlot}-(c) shows the layer breakdown for different number of users.
As shown, with only one user, about $1/3$ of the chunks were skipped. As the number of collaborating users increases, the algorithm starts to manage quality decisions and the layer assignment to users such that the number of skips is minimized. We observe a significant reduction in the number of skips even with only two cooperating users, while the skips are totally eliminated with 5 collaborating users.

\if0
\begin{figure}
\includegraphics[trim=0in 0.1in 0in 0in, clip, width=0.48\textwidth]{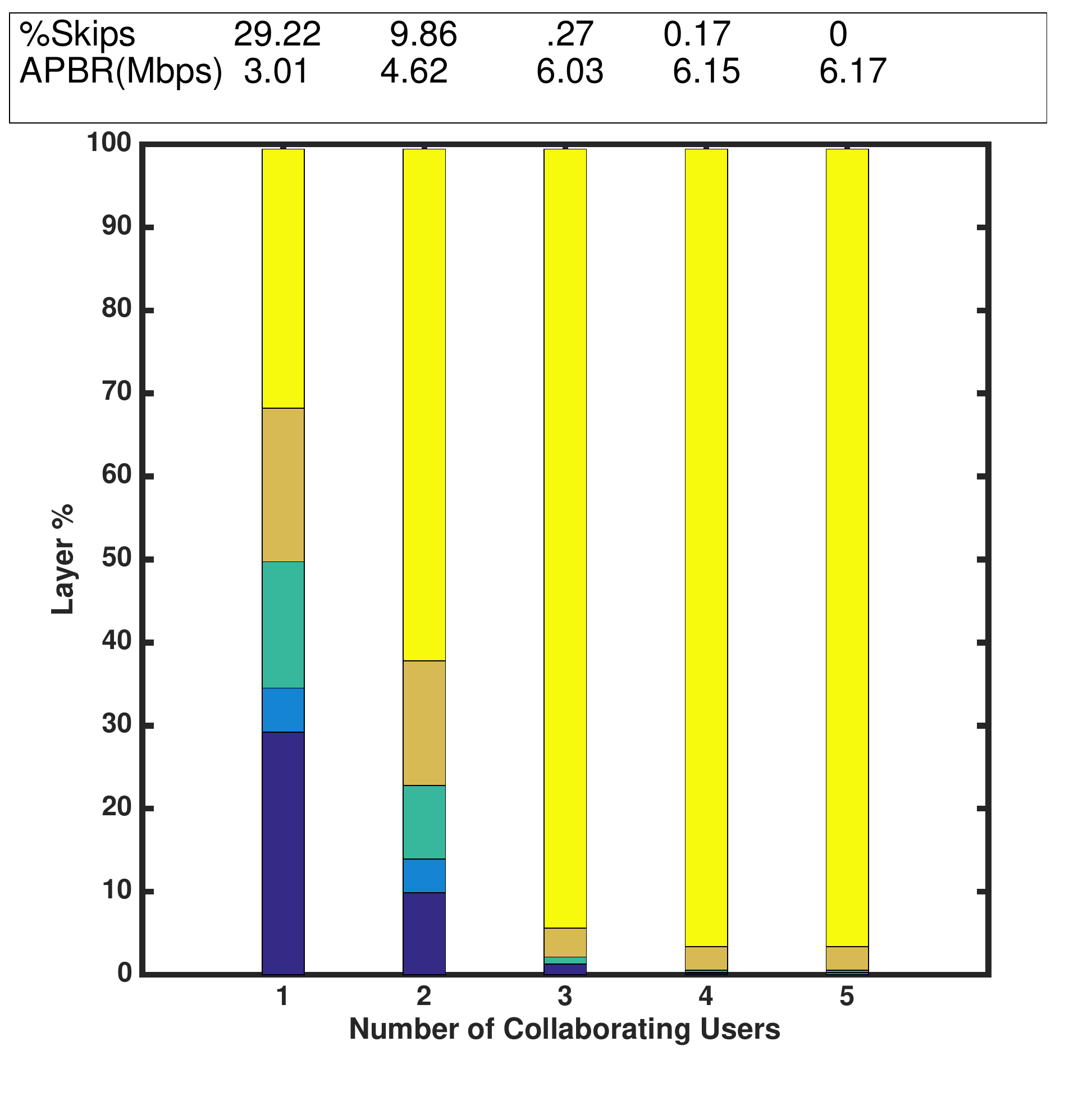}	
  \vspace{-.3in}
 \caption{Impact of the number of collaborating users}
  \label{fig : noUsers}
  \end{figure}
\fi

{\bf Impact of the chunk duration.} To evaluate the impact of the chunk duration on the performance of the algorithm, we have used synthetic rates of 1.5, 2.75, 4.8, and 7.8 Mbps to represent the cumulative nominal rates of layers from BL up to the EL3. We test chunk durations of 1, 2, 3, and 4 seconds. Therefore, the corresponding video length in chunks to each of the chunk duration  is 360, 180, 120, and 90 chunks, respectively. We have carefully chosen the bandwidth traces such that we do not run into skips in any of the traces. The results is listed in Table \ref{tab : segLen}.

 As observed from the table, the average playback decreases as the chunk size grows. This is because that we make a decision for longer playback time. Therefore, every playback second of that chunk needs to be delivered at the same quality. Note that the decision is not adjusted in the middle of the chunk to any bandwidth changes. Thus, shorter chunks allow for making more fine-grained decision, \ie making a decision per 1 second of playback allows for delivering every second at the best quality. In contrast, for the case when the chunk duration is 4 seconds, all of the 4 seconds need to be delivered at the same quality. Moreover, the chunk duration of 1 second allows a faster adjustability to network changes than making a decision per 4 seconds of playback duration.

\begin{table}[t!]
  \centering
  \caption{Impact of the chunk duration}
  \begin{tabular}{|c|cccc|} \hline
    Chunk length in seconds & 1 & 2 & 3 & 4 \\ \hline
   Average Playback rate in Mbps & 7.6 & 7.3 & 7.0 & 6.8 \\ \hline
  \end{tabular}
  \label{tab : segLen}
  \vspace{-.1in}
\end{table}

 {\bf Example Illustration.} We now study the preference-aware with finite contribution scenario in more detail. In Fig.~\ref{fig:coopFig2}, we plot an example of the bandwidth traces of users 1, 2, 3, and 4, while in Fig.~\ref{fig:coopFig3} we plot the quality decisions of the four algorithms. We plot ``Max layer ID +1" with respect to time, where ``Max layer ID" is defined in Table \ref{tab : maxL}. In the preference-aware scenario, users 1 and 2 can contribute to fetch all the chunks, while users 3 and 4 can only contribute to avoid skips. The total video length is 350 seconds, \ie 175 chunks $\times$ 2 seconds. 
 
 \begin{figure}[t!]
 	\centering
 	\includegraphics[trim=0in 0in 0in 0in, clip,  width=0.48\textwidth]{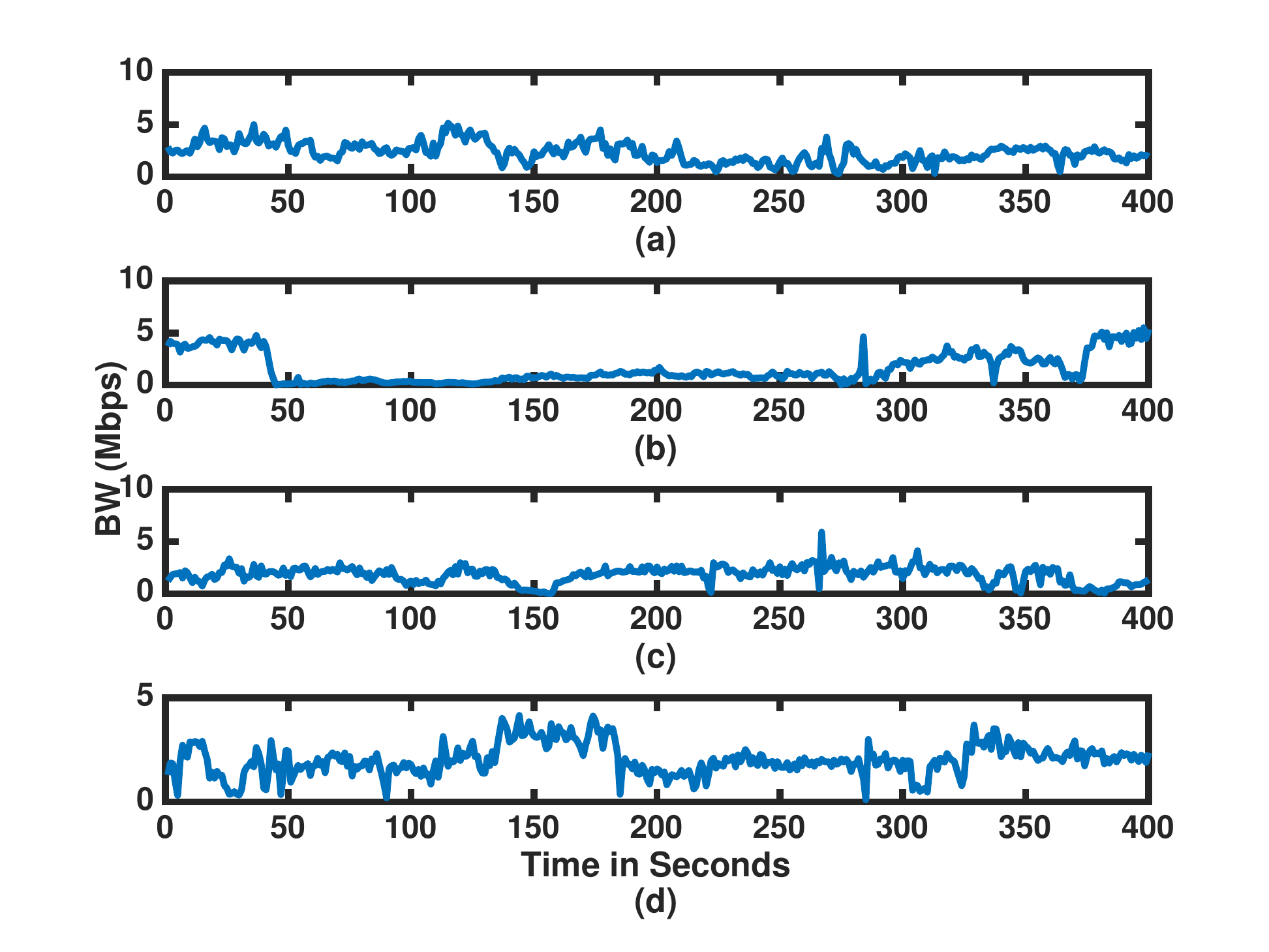} 
 	\caption{ Bandwidth traces for 4 users: (a) User 1, (b) User 2, (c) User 3, (d) User 4}
 	\label{fig:coopFig2}
 \end{figure}
 \begin{figure}[t!]
 	\centering
 	\includegraphics[trim=0in 0in 0in 0in, clip,  width=0.48\textwidth]{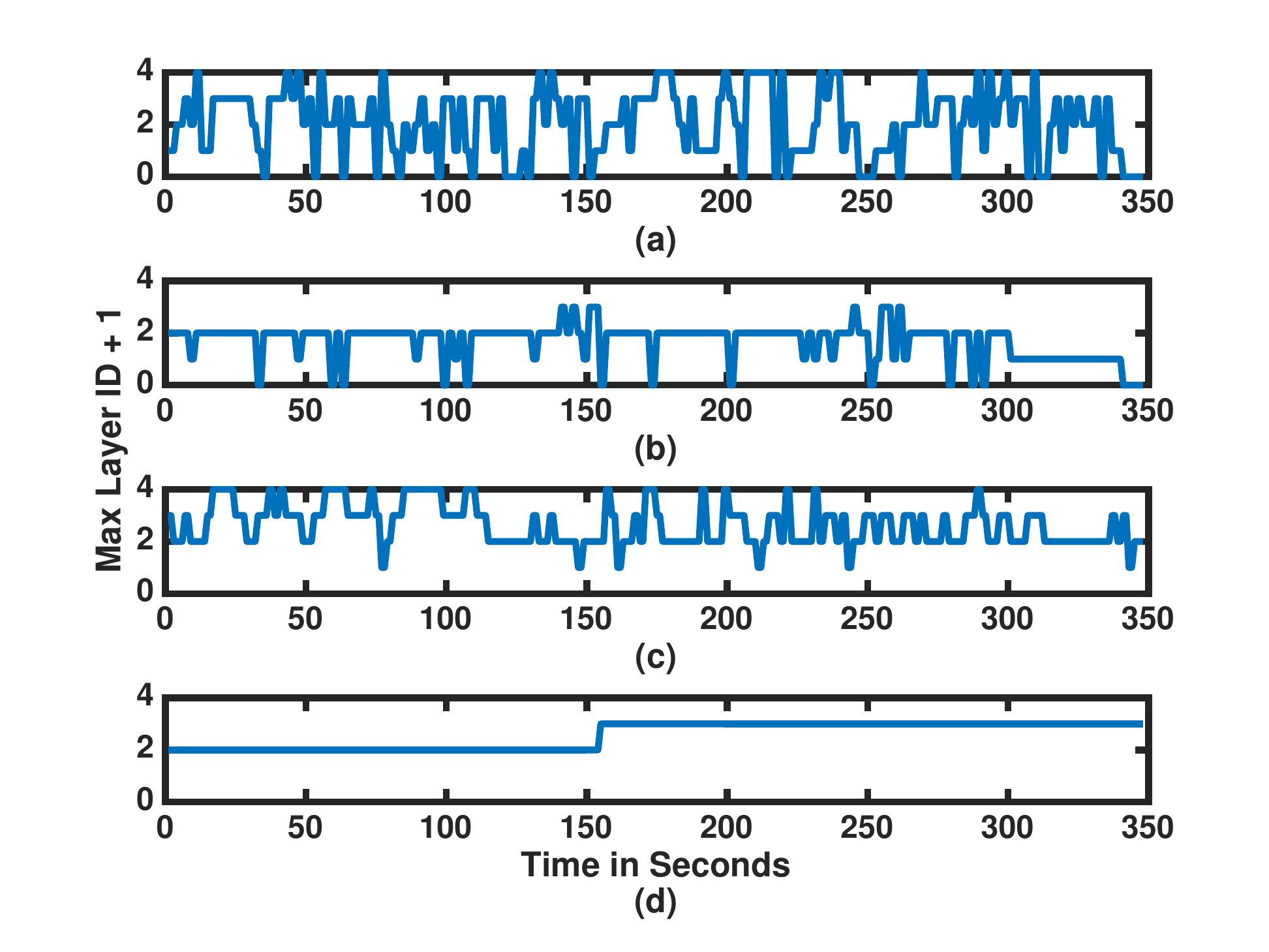} 
 	\caption{ Playback Quality Comparison of the 4 Algorithms: (a) BB, (b) PB, (c) GroupCast, (d) Offline GroupCast}
 	\label{fig:coopFig3}
 \end{figure}
 
 \begin{table}[t!]
	\centering
	\caption{Layer ID used in Fig.~\ref{fig:coopFig3}}
	\begin{tabular}{|c|ccccc|} \hline
		playback layer &Skip& BL & EL1 & EL2 & EL3 \\ \hline
		Max Layer ID  &-1 & 0 & 1 & 2 & 3 \\ \hline
	\end{tabular}
	\label{tab : maxL}
	\vspace{-.1in}
\end{table}
 
 In Fig.~\ref{fig:coopFig2}, we observe that the $2^{nd}$ user has high bandwidth in more than half of the video length. On the other hand, the bandwidth of user 1 allows, in average, the chunks' quality to oscillate between $EL2$ and $EL3$ quality levels. That is confirmed by the results of the Off-GroupCast in Fig.~\ref{fig:coopFig3}-(d). Moreover, the figure shows that Off-GroupCast can play the first 125 seconds at $EL2$ quality and the remaining at $EL3$ with a single quality switch when the bandwidth is perfectly predicted and the buffer is infinite.
 
 Compared to the two round robin strategies, GroupCast achieves the closest performance to the Off-GroupCast. As shown in Fig.~\ref{fig:coopFig3}-(c), GroupCast delivers most of the chunks at $EL2$ and $EL3$ with more quality switches as compared to the Off-GroupCast. Fig.~\ref{fig:coopFig3}-(a, b) illustrated that both BB and PB run into skips since both of them are assigning chunks according to round robin strategy. This strategy may lead to assigning a chunk to a link that cannot fully download the chunk within its deadline.  
Fig. \ref{fig:playbuff} depicts the variation of playout buffer occupancy, and the layer decisions for each chunk. We see that the layer decisions are negatively correlated with the buffer occupancy. In conclusion, incorporating the chunk deadlines, the predicted bandwidth, and favoring the subsequent chunks make GroupCast a robust algorithm.

%


\begin{figure}[t!]
	\centering
	\includegraphics[trim=0in 0in 0in 0in, clip,  width=0.48\textwidth]{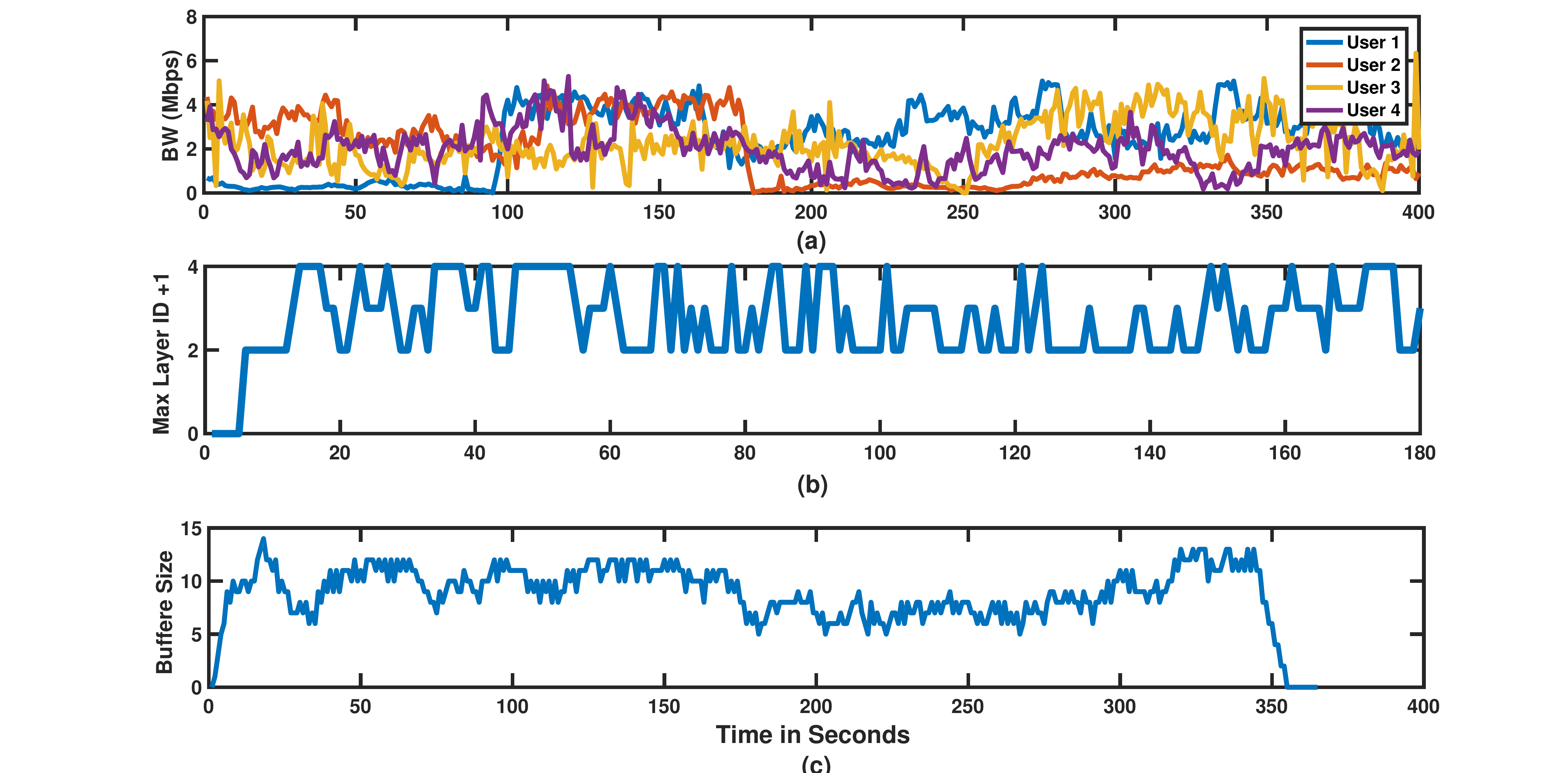} 
	\caption{Layer size and buffer occupancy when playing the video. (a) Bandwidth traces for four users, (b) the layer at which each chunk is fetched, and (c) the buffer occupancy.}
	\label{fig:playbuff}
\end{figure}

\end{thisnote}
\end{document}